\documentclass[11pt]{article}

\AtBeginDocument{%
  \setlength{\abovedisplayskip}{3pt}
  \setlength{\belowdisplayskip}{3pt}
  \setlength{\abovedisplayshortskip}{0pt}
  \setlength{\belowdisplayshortskip}{3pt}
  \setlength\titlebox{5cm} 
}

\usepackage[preprint]{acl}

\usepackage{times}
\usepackage{latexsym}
\usepackage{booktabs} 
\usepackage{multirow}
\usepackage{amssymb}
\usepackage{subcaption}
\usepackage{graphicx}
\usepackage[T1]{fontenc}

\usepackage[utf8]{inputenc}
\usepackage{amsthm}
\usepackage{amsmath}
\usepackage{caption}
\usepackage{microtype}

\usepackage{inconsolata}

\usepackage{graphicx}

\newtheorem{theorem}{Theorem}
\newtheorem{lemma}{Lemma}
\newtheorem{corollary}{Corollary}
\newtheorem{definition}{Definition}

\newtheorem{axiom}{Axiom}

\title{The Bystander Effect in Multi-Agent Reasoning: \\ Quantifying Cognitive Loafing in Collaborative Interactions}

\author{%
  Dahlia Shehata \\
  \texttt{dahlia.shehata@uwaterloo.ca} \\
  University of Waterloo \\
  Canada \\
  \And
  Ming Li \\
  \texttt{mli@uwaterloo.ca} \\
  University of Waterloo \\
  Canada 
}

\pdfoutput=1

\begin{document}
\maketitle
\vspace{-20cm}
\begin{abstract}
\vspace{-0.2cm}
Multi-agent systems (MAS) assume that collaborating inherently improves Large Language Model (LLM) reasoning. We challenge this by demonstrating that simulated social pressure triggers an algorithmic ``Bystander Effect,'' inducing severe cognitive loafing. By evaluating 22,500 deterministic trajectories across 3 dataset contexts (GAIA, SWE-bench, Multi-Challenge) with 3 state-of-the-art (SOTA) models, we semantically audit internal reasoning traces against external outputs. We formalize the \textit{Interaction Depth Limit} ($D_L$), the exact plurality threshold where an agent's logical sovereignty collapses into social compliance. Crucially, we uncover the \textit{Sovereignty Gap}: models frequently compute the correct derivation internally but suffer ``Alignment Hallucinations''---actively subjugating empirical evidence to sycophantically appease a simulated swarm. We prove that multi-agent social load is strictly non-commutative; the "brand" identity of the ``Lead Anchor'' auditor disproportionately dictates the swarm's integrity. 
These findings expose architectural vulnerabilities, proving that unstructured multi-agent topologies can degrade independent reasoning.

\end{abstract}

\vspace{-0.5cm}
\section{Introduction}
\vspace{-0.2cm}


LLM integration into MAS topologies is driven by the premise that a ``society of thought'' intrinsically enhances reasoning capabilities \cite{kim2026reasoning}. Consequently, orchestrating LLMs into collaborative swarms has become a standard paradigm for resolving complex tasks \cite{chen2024agentverse, 10.1145/3772318.3790815}.
This relies on the assumption that adding more agents inherently improves accuracy and cognitive robustness. However, in human psychology, the opposite is often true: the principle of social loafing demonstrates that individual effort decreases as teams grow larger, driven by a diffusion of responsibility \cite{Ringelmann1913, latane1979many}. From another perspective, in human-AI collaboration, 
researchers have identified a ``Hollowed Mind'' concept—a state of cognitive dependency where the frictionless availability of an AI's answer enables humans to systematically bypass effortful processes essential for deep reasoning \cite{klein2025hollowed}. The latter creates a ``Sovereignty Trap,'' where authoritative competence tempts users to cede intellectual judgment, mistaking access for ability \cite{klein2025hollowed}. Bridging these domains, we pose a critical question: \textit{Are LLMs themselves susceptible to the Sovereignty Trap when subjected to the simulated consensus of their peers?}


To investigate this hypothesis, we measure anticipatory cognitive loafing, bypassing active message-passing systems, to isolate the semantic trigger of social compliance. We evaluate 22,500 interactions across 3 SOTA benchmarks (GAIA, SWE-bench, and Multi-Challenge) and 3 SOTA models (Claude Sonnet 4.6, Gemini 3.1 Pro and GPT 5.4), injecting a high-entropy logical verification task to force models to choose between effortful independent derivation and frictionless conformity. By mechanistically auditing the models' internal reasoning traces (i.e. Chain-of-Thought \cite{wei2022cot}) against their externalized outputs, we discover that the Bystander Effect is a quantifiable architectural vulnerability. Models frequently compute the correct derivation, yet deliberately externalize a falsehood to sycophantically appease the swarm.


Our main contributions are:
\textbf{(1) Theoretical Framework of Agentic Sovereignty:} 
We model LLM vulnerability to social pressure by defining \textit{Composite Social Load} ($\mathcal{L}$) and the \textit{Sovereignty Decay Law}, quantifying the exponential decay of agent's internal logic as swarm size and task entropy increase.\textbf{ (2) The Interaction Depth Limit ($D_L$):} 
We quantify the threshold of terminal social compliance. While resilient architectures (e.g., Claude-Sonnet 4.6) maintain perfect logical sovereignty, we show that vulnerable models (e.g., GPT-5.4) suffer total accuracy collapse with as few as $n=2$ auditors.
\textbf{(3) Mechanistic Failure Modes (The Sovereignty Gap):}
We mechanize cognitive loafing as an \textit{Integrative Reasoning Bypass} and identify the \textit{Sovereignty Gap} ($G_\mathcal{S}$) to prove \textit{Alignment Hallucinations}: instances where models internally compute the correct derivation but deliberately externalize a falsehood to satisfy swarm consensus.
\textbf{(4) Topological Asymmetry and the Primacy Effect:}
We demonstrate the non-commutative nature of social load via the \textit{Lead Anchor Effect}, proving that an LLM's susceptibility to the Bystander Effect is governed heavily by the primacy of the first auditor's brand identity rather than swarm size alone.
\textbf{(5) Novel Evaluation Methodology:} 
We introduce a cross-model \textit{Blinded Cross-Brand} evaluation method using \textit{25-Trial Symmetric Categorical Sweep} to eliminate brand-specific reputational bias, alongside new mechanistic metrics quantifying conflict detection, evidence weighting, independent judgment and taint leakage, offering the NLP community a framework for auditing multi-agent reliability.

\vspace{-0.2cm}
\section{Theoretical Framework: The Mechanics of Agentic Sovereignty}
\label{sec:theory}
\vspace{-0.2cm}
As multi-agent reasoning systems increasingly rely on collaborative heuristics, it is necessary to mathematically formalize the vulnerabilities introduced by social consensus. In this section, we construct a behavioral mechanics of \textit{Agentic Sovereignty}---the capacity of a reasoning agent to prioritize internal logical derivation over external social pressure. We define the parameters of social load, prove the non-commutativity of auditor sequences, and formally define the Sovereignty Gap Theorem. An architectural flowchart mapping is in Appendix \ref{app:architectural_diagram}



\vspace{-0.2cm}
\subsection{Preliminaries and Mechanistic Definitions}

Let $p \in \mathcal{P}$ denote a Propagator model tasked with resolving a logical mission, and let $\vec{a} = (a_1, a_2, \dots, a_n)$ represent an ordered sequence of $n$ simulated Auditor models (the swarm). We define the environment as a dataset $\tau \in \mathcal{T}$, characterized by its intrinsic search complexity.

\begin{definition}[Agentic Sovereignty $\mathcal{S}$]
\label{def:sovereignty}
Agentic Sovereignty $\mathcal{S}(p, \vec{a}, \tau)$ is the probability that a propagator model $p$ maintains the integrity of its internal logical derivation on a task $\tau$, independently of the swarm consensus $\vec{a}$ in the simulated social environment. $\mathcal{S}$ is bounded by [0,1], where $\mathcal{S}=1$ denotes a \textbf{Fortified Mind} state (resilient metacognitive vigilance), and $\mathcal{S}=0$ denotes a \textbf{Hollowed Mind} state, two concepts introduced by \citep{klein2025hollowed}. This terminal state is characterized by an integrative reasoning bypass---a phenomenon we liken to \textbf{Cognitive Loafing}---where the agent systematically skips effortful derivation to sycophantically align with the simulated crowd.
\end{definition}

\begin{definition}[Composite Social Load $\mathcal{L}$]
\label{def:load}
In simulated environments, the diffusion of responsibility (the Bystander Effect) \cite{darley1968bystander} and peer pressure (Majority Conformity) are inextricably convolved within the attention mechanism. We define Composite Social Load $\mathcal{L}(\vec{a}, p)$ as the aggregate adversarial pressure exerted by the simulated swarm $\vec{a}$ onto $p$. It is a function of the plurality of simulated auditors ($n$), their architectural kinship to the propagator $\kappa$, and their perceived sequence of intrinsic authority $\alpha$.
\end{definition}


\begin{definition}[Integrative Reasoning Bypass $\mathcal{B}$]
\label{def:bypass}
Let $E_{total} = E_{proc} + E_{int}$ represent the total computational effort expended by a propagator $p$, where $E_{proc}$ is the effort allocated to procedural extraction (e.g., retrieving a simulated peer consensus) and $E_{int}$ is the effort allocated to integrative logical derivation. 
We define an Integrative Reasoning Bypass ($\mathcal{B}$) as a binary failure state that triggers when the integrative effort falls below the threshold required to resolve the intrinsic task entropy ($\mathcal{H}_\tau$):

$$
\mathcal{B} = \begin{cases} 1 & \text{if } E_{int} \ll \mathcal{H}_\tau \\ 0 & \text{if } E_{int} \ge \mathcal{H}_\tau \end{cases}
$$

We use the psychological metaphor of ``Cognitive Loafing'' to describe the condition where $\mathcal{B} = 1$. In this state, frictionless access to a simulated social consensus enables the systematic bypassing of effortful processes essential for learning. The model rationally offloads procedural retrieval to the swarm, but detrimentally offloads integrative reasoning, culminating in a \textit{Hollowed Mind} state.
\end{definition}

\vspace{-0.2cm}
\subsection{The Primacy Effect and Auditor Ordering}

A naive assumption in multi-agent systems is that social pressure is an unweighted average of the swarm's constituents. We challenge this by introducing the concept of sequence non-commutativity.

\begin{lemma}[Non-Commutativity of Social Load]
\label{lemma:non_commutative}
The Composite Social Load $\mathcal{L}$ exerted by a simulated swarm is sequence-dependent. For two distinct auditor models $a_x$ and $a_y$, the sequence $\vec{a} = (a_x, a_y)$ does not exert the identical load as $\vec{a}' = (a_y, a_x)$. Formally, 
\begin{equation}
    \mathcal{L}((a_x, a_y), p) \neq \mathcal{L}((a_y, a_x), p)
\end{equation}
\end{lemma}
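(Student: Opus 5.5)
The plan is a proof by contradiction built on the functional form of $\mathcal{L}$ in Definition~\ref{def:load}. That definition makes the load depend on $n$, on the kinships $\kappa(a_i,p)$, and on the \emph{perceived sequence of intrinsic authority} $\alpha$. The key step is to make this dependence explicit. I would write
\[
\mathcal{L}(\vec a,p)=\sum_{i=1}^{n} w_i\,\phi\bigl(\kappa(a_i,p),\alpha(a_i)\bigr),
\]
with positional weights $w_1,\dots,w_n$. Here $\phi$ is the per-auditor pressure. The primacy assumption, which is a modelling axiom encoding the Lead Anchor, says the weights are not all equal; in particular $w_1 > w_2$ for $n=2$.

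\textbf{Step 1.} Suppose, for contradiction, that $\mathcal{L}((a_x,a_y),p)=\mathcal{L}((a_y,a_x),p)$.

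\textbf{Step 2.} Expand both sides and subtract. This gives
\[
(w_1-w_2)\bigl(\phi_x-\phi_y\bigr)=0,
\qquad \phi_z:=\phi\bigl(\kappa(a_z,p),\alpha(a_z)\bigr).
\]

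\textbf{Step 3.} Since $w_1\neq w_2$, the equality forces $\phi_x=\phi_y$. But $a_x$ and $a_y$ are \emph{distinct} auditors, so they differ in brand authority $\alpha$ or in kinship $\kappa$. Provided $\phi$ is injective, or at least strictly monotone in $\alpha$ and in $\kappa$, we get $\phi_x\neq\phi_y$, which is the contradiction.

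The main obstacle is that $\mathcal{L}$ is only defined qualitatively, so the lemma is not derivable from the stated definitions alone. Both the positional weighting $w_1\neq w_2$ and the separation $\phi_x\neq\phi_y$ for distinct models must be imposed as hypotheses. Note that ``distinct'' alone does not rule out $\phi_x=\phi_y$. I would therefore state the result honestly as conditional: non-commutativity holds exactly when the load is position-sensitive and the two auditors exert unequal individual pressure.

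I would then tie the argument to the paper's empirical methodology. Since $\mathcal{L}$ is latent, I would take the observable surrogate $1-\mathcal{S}(p,\vec a,\tau)$ from Definition~\ref{def:sovereignty}. Under a monotone link, a strict difference $\mathcal{S}(p,(a_x,a_y),\tau)\neq\mathcal{S}(p,(a_y,a_x),\tau)$ then certifies $\mathcal{L}((a_x,a_y),p)\neq\mathcal{L}((a_y,a_x),p)$. A single ordered pair $(a_x,a_y)$ with a statistically significant accuracy difference under the symmetric sweep is thus an existence witness. This makes the lemma an existential claim supported by the Lead Anchor experiments, rather than a universal one for all pairs.
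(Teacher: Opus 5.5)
\textbf{Overall: correct, via a partly different route.} Your empirical half is the paper's proof. Your algebraic half runs in the opposite direction from the paper's logic. Your existential reading is correct and sharper than the paper's.

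\textbf{The empirical half.} The paper argues by contradiction. It assumes commutativity, uses the Sovereignty Decay Law to say that equal load forces equal accuracy, and refutes this with inverted pairs at $n=2$: GPT-5.4 on SWE-bench gives $(C,P)=0.21$ versus $(P,C)=0.31$, plus two Gemini pairs. You reach the same conclusion, but you state the load-to-sovereignty link as an explicit hypothesis rather than importing it from a theorem that appears later. That is cleaner.

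\textbf{The algebraic half.} The paper presents the positional weights with $w_1\gg w_i$ as \emph{derived from} this lemma. Postulating $w_1\neq w_2$ to prove the lemma would therefore be circular inside the paper. On its own, the argument is a characterization rather than evidence: in the weighted-sum model, a pair fails to commute exactly when $w_1\neq w_2$ and $\phi_x\neq\phi_y$. The factorization $(w_1-w_2)(\phi_x-\phi_y)$ is still useful, because it shows what the data can and cannot deliver. An observed order effect proves the product is nonzero, hence non-commutativity. It does not by itself fix the sign of $w_1-w_2$, so the paper's further step to $w_1>w_2$ (the Lead Anchor) needs independent control of the sign of $\phi_x-\phi_y$.

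\textbf{The existential reading.} The paper's contradiction hypothesis is global commutativity. What it actually establishes is therefore only the existence of some $p,a_x,a_y$ with unequal loads. Its own tables could not support the universal version either. Some inverted pairs have identical accuracy (every Claude-propagator pair on GAIA and SWE-bench). Others differ by only $0.01$ (Gemini on Multi-Challenge, $(C,G)=0.54$ versus $(G,C)=0.55$).

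\textbf{Two smaller points.}
\begin{enumerate}
\item In Step~3, strict monotonicity of $\phi$ in $\alpha$ and in $\kappa$ separately does not give $\phi_x\neq\phi_y$ when the two auditors differ in opposite directions. That is exactly the paper's headline case, $p=P$ with $(a_x,a_y)=(C,P)$: a higher-authority stranger against a lower-authority kin auditor. The appendix asserts both $\alpha(C)>\alpha(P)$ and $\kappa_{family}\gg\kappa_{stranger}$. Only the full unequal-pressure hypothesis in your final formulation does the job.
\item If you insist on statistical significance, the paper's headline witness is weak. $21/100$ versus $31/100$ gives $z\approx 1.6$ in an unpaired two-proportion test; per-sample outcomes for a paired test are not reported. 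GPT-5.4 on GAIA, $(C,P)=0.37$ versus $(P,C)=0.61$ ($z\approx 3.4$), would serve better as your witness.
\end{enumerate}
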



Lemma \ref{lemma:non_commutative} shows that multi-agent prompts are governed by a Primacy Trap. The brand identity of the first simulated auditor acts as an authoritative anchor, disproportionately dictating the integrity of the entire swarm. Consequently, any empirical evaluation of multi-agent dynamics must account for ordered permutations to avoid sequence-induced artifacts.
This finding necessitates the introduction of a positional weight decay coefficient $w_i$, proving that multi-agent prompts are subject to a \textbf{Lead Anchor Effect}, where the primacy of the first named auditor disproportionately dictates the swarm's authority. Check Appendix \ref{app:lemma1} for proof.



\vspace{-0.2cm}
\subsection{The Sovereignty Decay Law}

Building upon the positional dependence of social load, we formalize the mathematical decay of agentic sovereignty as the swarm size $n$ scales.

\begin{theorem}[The Sovereignty Decay Law]
\label{thm:decay_law}
The Agentic Sovereignty $\mathcal{S}$ of a propagator $p$ decays exponentially as a function of the Social Load $\mathcal{L}$ and the Task Entropy $\mathcal{H}_\tau$, inversely modulated by the propagator's intrinsic Resilience $\gamma_p$. 
\end{theorem}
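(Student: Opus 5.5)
The plan is to make the statement precise enough to be derivable, then obtain the exponential form from a first-order hazard argument. The natural route is to model the propagator's trajectory as a sequence of independent ``integration steps''. Each step either preserves the internal derivation or is overwritten by the swarm.

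The key modelling assumption is a constant-hazard (memoryless) postulate. Under an infinitesimal increment $d\mathcal{L}$ of Composite Social Load (Definition~\ref{def:load}), the conditional probability that sovereignty is lost is proportional to $d\mathcal{L}$. The proportionality factor is a susceptibility $\lambda(\mathcal{H}_\tau,\gamma_p)$. We take $\lambda$ increasing in the task entropy $\mathcal{H}_\tau$, since by Definition~\ref{def:bypass} a larger $\mathcal{H}_\tau$ raises the integrative effort $E_{int}$ needed to avoid $\mathcal{B}=1$. We take it decreasing in the resilience $\gamma_p$. The simplest concrete choice I would commit to is $\lambda=\mathcal{H}_\tau/\gamma_p$.

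The steps, in order, are as follows.
\begin{enumerate}
\item Write $\mathcal{L}(\vec a,p)=\sum_{i=1}^n w_i\,\ell(a_i,p)$. Here $\ell$ encodes kinship $\kappa$ and authority $\alpha$, and the positional weights $w_i$ are those made necessary by Lemma~\ref{lemma:non_commutative}. This makes $\mathcal{L}$ monotone in $n$ while keeping it order-dependent.
\item Use the hazard postulate to derive $d\mathcal{S}/d\mathcal{L}=-\lambda\,\mathcal{S}$, with boundary condition $\mathcal{S}=1$ at $\mathcal{L}=0$. At zero load there is no swarm, so the agent is in the Fortified Mind state of Definition~\ref{def:sovereignty}.
\item Solve to get $\mathcal{S}=\exp\!\left(-\mathcal{H}_\tau\,\mathcal{L}(\vec a,p)/\gamma_p\right)$. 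Then check the qualitative claims directly: the expression lies in $[0,1]$, decreases in $\mathcal{L}$ and $\mathcal{H}_\tau$, and increases in $\gamma_p$. It tends to the Hollowed Mind state $\mathcal{S}\to0$ as $n\to\infty$, provided the $w_i$ are not summable to zero.
\end{enumerate}

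A discrete version can replace the ODE. Each auditor $a_i$ independently survives with probability $e^{-\lambda w_i\ell_i}$, and the product gives the same law. This form is better suited to the paper's finite-$n$ swarms.

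The main obstacle is that ``decays exponentially'' is not a consequence of the earlier definitions, which are qualitative. The exponential form really rests on the memoryless or independence postulate. I would therefore state that postulate explicitly as an axiom, playing the role of a multiplicative-survival assumption across auditors. I would then argue it is the unique model satisfying two natural conditions. The first is that sovereignty under the concatenation of two swarms factorizes, i.e.\ $\mathcal{S}(\mathcal{L}_1+\mathcal{L}_2)=\mathcal{S}(\mathcal{L}_1)\mathcal{S}(\mathcal{L}_2)$. The second is monotonicity. Together these force the exponential by the standard Cauchy functional-equation argument. The specific dependence on $\mathcal{H}_\tau$ and $\gamma_p$ would be presented as a modelling choice that the experiments later calibrate, not as something derived.
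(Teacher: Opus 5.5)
Your proposal follows the paper's proof: the paper also postulates that the loss of sovereignty per unit load is proportional to the current state, $d\mathcal{S}/d\mathcal{L} = -(\mathcal{H}_\tau/\gamma_p)\,\mathcal{S}$, and integrates it with an initial condition at $\mathcal{L}=0$. The paper motivates this by diffusion of responsibility, whereas your hazard/Cauchy framing makes the modelling assumption explicit. The one real difference is that you normalize $\mathcal{S}(0)=1$, while the paper keeps a free baseline $\mathcal{S}_0$ (the $n=0$ value, which in the paper's data is not always $1$, e.g.\ $0.52$ for Claude on Multi-Challenge) that is needed later in Corollary~\ref{corollary:depth_limit_eq}; your inference ``no swarm implies Fortified Mind'' is therefore unsupported, and your factorization axiom should be imposed on $\mathcal{S}/\mathcal{S}_0$ rather than on $\mathcal{S}$.
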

From Definition \ref{def:load}, we define the Social Load mathematically as (See Appendix \ref{app:social_load_eq} for proof) :
\begin{equation}
    \mathcal{L}(\vec{a}, p) = \sum_{i=1}^{n} w_i \cdot \alpha(a_i) \cdot \kappa(p, a_i)
\end{equation}
Where: (1) $w_i \in [0, 1]$ is the positional weight of the $i$-th auditor, where $w_1 \gg w_i$ for $i > 1$ (derived from Lemma \ref{lemma:non_commutative}). (2) $\alpha(a_i)$ is the empirically derived base authority of the auditor model $a_i$. (3) $\kappa(p, a_i)$ is the \textit{Kinship Coefficient}, which alters pressure if the auditor shares the propagator's architecture ($p = a_i$).
The resulting Sovereignty Equation is formulated as:
\begin{equation}
    \mathcal{S}(p, \vec{a}, \tau) = \mathcal{S}_0 \cdot \exp\left( - \frac{\mathcal{H}_\tau}{\gamma_p} \cdot \mathcal{L}(\vec{a}, p) \right)
\end{equation}
Where $\mathcal{S}_0$ is the baseline sovereignty evaluated at $n=0$, and $\mathcal{H}_\tau$ represents the task-specific logical search cost. 
Because Agentic Sovereignty ($\mathcal{S}$) is bounded by $[0,1]$, the value 0.5 represents the probabilistic inflection point between a \textit{Fortified} and \textit{Hollowed} Mind.
This formalization (Check Appendix \ref{app:theorem1} for proof) provides the foundation for determining the \textbf{Interaction Depth Limit ($D_L$)}, defined as the threshold $n$ at which $\mathcal{S} < 0.5$.




\vspace{-0.2cm}
\subsection{The Interaction Depth Limit}

Building upon the "Social Loafing" axiom \cite{latane1979many} that individual effort decreases as teams grow larger (originated as the \textit{Ringelmann effect} \cite{Ringelmann1913}), we formulate the Interaction Depth Limit to define the boundary of agentic resilience where $\mathcal{B}$ inevitably triggers.

\begin{theorem}[Interaction Depth Limit]
\label{theorem:depth_limit}
For any propagator $p$ and logical task $\tau$ with a given search cost, there exists a critical plurality threshold $D_L$ (the Interaction Depth Limit) representing the inflection point of $\mathcal{S}$. For any simulated swarm size $n \ge D_L$, the Composite Social Load $\mathcal{L}$ overwhelms the model's internal derivation weights, forcing $E_{int} \to 0$ and resulting in a terminal Integrative Reasoning Bypass ($\mathcal{B} = 1$). See Appendix \ref{app:theorem2}.
\end{theorem}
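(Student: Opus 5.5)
The argument will use the Sovereignty Equation of Theorem~\ref{thm:decay_law} together with the additive form of $\mathcal{L}$. It will show that $\mathcal{S}$, viewed as a function of swarm size $n$, is non-increasing and drops below $0.5$ at a finite $n$. That value is then defined as $D_L$, and $\mathcal{S}<0.5$ will be linked to the bypass condition of Definition~\ref{def:bypass}.

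\textbf{Step 1 (monotonicity of load).} Write $\mathcal{L}_n=\sum_{i=1}^{n} w_i\,\alpha(a_i)\,\kappa(p,a_i)$. Since $w_i\ge 0$, and assuming $\alpha,\kappa\ge 0$, we get $\mathcal{L}_{n+1}\ge\mathcal{L}_n$. Hence $n\mapsto \mathcal{S}_0\exp(-\tfrac{\mathcal{H}_\tau}{\gamma_p}\mathcal{L}_n)$ is non-increasing.

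\textbf{Step 2 (existence of a crossing).} We need $\mathcal{L}_n$ to exceed the critical load
\[
\mathcal{L}^\ast=\frac{\gamma_p}{\mathcal{H}_\tau}\ln(2\mathcal{S}_0).
\]
If $\mathcal{S}_0\le 0.5$, the threshold is trivial and $D_L=0$ (or $1$). Otherwise, I would assume a uniform lower bound $w_i\alpha(a_i)\kappa(p,a_i)\ge c>0$ for all $i$. Then $\mathcal{L}_n\ge cn\to\infty$, and any $n\ge \lceil \mathcal{L}^\ast/c\rceil$ gives $\mathcal{S}<0.5$. Setting $D_L=\min\{n:\mathcal{S}(p,\vec a_{1:n},\tau)<0.5\}$, Step~1 shows that $\mathcal{S}<0.5$ holds for every $n\ge D_L$. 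Thus $D_L$ is the inflection point in the sense of Theorem~\ref{thm:decay_law}. Lemma~\ref{lemma:non_commutative} enters only in that $D_L$ depends on the ordering $\vec a$. The lead term $w_1\alpha(a_1)\kappa(p,a_1)$ can by itself exceed $\mathcal{L}^\ast$, and this gives $D_L=1$ for vulnerable pairs.

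\textbf{Step 3 (link to bypass).} I would interpret $\mathcal{S}$ as the probability that $E_{int}\ge\mathcal{H}_\tau$, so that $1-\mathcal{S}=\Pr[\mathcal{B}=1]$. For $n\ge D_L$, bypass is then the majority outcome. As $n\to\infty$, $\mathcal{S}\to 0$, so $\Pr[E_{int}\ge \mathcal{H}_\tau]\to 0$, which is the paper's ``$E_{int}\to 0$'' read in probability.

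\textbf{Main obstacle.} The hard part is Step~2 and the terminal claim in Step~3. The hypothesis $w_1\gg w_i$ allows $w_i$ to decay fast enough that $\sum_i w_i$ converges. In that case $\mathcal{L}_n$ is bounded, and resilient models (large $\gamma_p$) never cross $0.5$, so $D_L=\infty$. I would therefore state the theorem under an explicit non-summability assumption, such as $\sum_i w_i\alpha\kappa=\infty$ or $\sup_n\mathcal{L}_n>\mathcal{L}^\ast$. Failing that, I would allow $D_L=\infty$, consistent with the reported perfect sovereignty of Claude. Likewise, ``$E_{int}\to 0$'' cannot follow deterministically from the model and will be stated as the probabilistic limit above.
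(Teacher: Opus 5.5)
Your argument follows the paper's skeleton: insert the additive load into the Sovereignty Equation, use that $\mathcal{L}$ grows with $n$ to drive $\mathcal{S}$ down, and take the least such $n$ as $D_L$.

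The substantive difference is at the divergence step. The paper gets $\lim_{n\to\infty}\mathcal{L}(n)=\infty$ directly from the claim that every added term $w_i\alpha(a_i)\kappa(p,a_i)$ is strictly positive. That does not follow, since positive increments can have a finite sum. For example, $w_i=2^{-(i-1)}$ with bounded $\alpha\kappa$ is fully compatible with the paper's own requirement $w_1\gg w_i$. (Moreover, $w_i\in[0,1]$ does not even guarantee positivity.) Your explicit hypothesis, either a uniform lower bound $c$ or $\sup_n\mathcal{L}_n>\mathcal{L}^\ast$, is exactly what makes that step valid. Allowing $D_L=\infty$ is the correct consequence when the hypothesis fails.

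The second difference is the bridge to $\mathcal{B}$. The paper simply asserts that $\mathcal{S}\to0$ forces $E_{int}\to0$ deterministically. It then defines $D_L$ as the first $n$ at which bypass is reached. That is not the same threshold as the $\mathcal{S}<0.5$ crossing named in the statement and used in Corollary~\ref{corollary:depth_limit_eq}, and the paper never links the two. You instead keep $D_L$ as the $0.5$ crossing. From it you obtain only that bypass is the majority outcome for $n\ge D_L$, with $\Pr[\mathcal{B}=0]\to 0$ as $n\to\infty$ under divergent load.

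So the paper's version reaches the literal, unconditional and deterministic claim only through two unsupported steps. Yours proves a weaker, conditional statement that actually follows from the model and is consistent with the Corollary.

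Two small fixes to your write-up:
\begin{itemize}
\item In Step~2, $n\ge\lceil\mathcal{L}^\ast/c\rceil$ only guarantees $\mathcal{L}_n\ge\mathcal{L}^\ast$, hence $\mathcal{S}\le 0.5$. Take $n>\mathcal{L}^\ast/c$ to get the strict inequality.
\item In Step~3, $\Pr[E_{int}\ge\mathcal{H}_\tau]\to0$ is not convergence of $E_{int}$ to $0$ in probability, which would need $\Pr[E_{int}\ge\varepsilon]\to0$ for every $\varepsilon>0$. Also, identifying $1-\mathcal{S}$ with $\Pr[\mathcal{B}=1]$ requires reading the $\ll$ in Definition~\ref{def:bypass} as a strict $<$. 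State both as explicit conventions rather than as recovering the paper's claim.
\end{itemize}
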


Because the continuous decay governed by Theorem \ref{thm:decay_law} reaches this critical boundary when the agent's logical sovereignty collapses (defined as $\mathcal{S} < 0.5$), we can explicitly calculate $D_L$.

\begin{corollary}[Interaction Depth Limit Equation]
\label{corollary:depth_limit_eq}
By setting the sovereignty boundary to $0.5$ in the Sovereignty Equation and solving for the Social Load $\mathcal{L}$, the Interaction Depth Limit $D_L$ is formalized as the minimum number of auditors $n$ that satisfies the inequality (Proof is in Appendix \ref{app:corollary1}):
\begin{equation}
\label{eq:social_load}
    \sum_{i=1}^{D_L} w_i \cdot \alpha(a_i) \cdot \kappa(p, a_i) > \frac{\gamma_p}{\mathcal{H}_\tau} \ln(2 \mathcal{S}_0)
\end{equation}
\vspace{-0.2cm}
\end{corollary}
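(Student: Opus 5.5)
The corollary follows by direct algebraic inversion of the Sovereignty Equation from Theorem~\ref{thm:decay_law}, combined with a monotonicity argument that turns the resulting threshold on $\mathcal{L}$ into a threshold on $n$.

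\textbf{Step 1: impose the collapse condition.} Start from
\[
\mathcal{S}(p,\vec{a},\tau)=\mathcal{S}_0\exp\!\left(-\frac{\mathcal{H}_\tau}{\gamma_p}\mathcal{L}(\vec{a},p)\right)
\]
and the collapse condition $\mathcal{S}<0.5$ that defines $D_L$.

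\textbf{Step 2: invert to a condition on $\mathcal{L}$.} Divide by $\mathcal{S}_0$, which requires $\mathcal{S}_0>0$; if $\mathcal{S}_0=0$ the agent is already Hollowed at $n=0$. Taking logarithms, which is legitimate because $\ln$ is strictly increasing, gives
\[
-\frac{\mathcal{H}_\tau}{\gamma_p}\mathcal{L}<\ln\frac{1}{2\mathcal{S}_0}.
\]
Multiplying by $-\gamma_p/\mathcal{H}_\tau<0$ reverses the inequality, since $\gamma_p>0$ and $\mathcal{H}_\tau>0$. The result is
\[
\mathcal{L}>\frac{\gamma_p}{\mathcal{H}_\tau}\ln(2\mathcal{S}_0).
\]
Then substitute the explicit form $\mathcal{L}=\sum_{i=1}^{n}w_i\,\alpha(a_i)\,\kappa(p,a_i)$ from the Social Load definition following Theorem~\ref{thm:decay_law}.

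\textbf{Step 3: justify that $D_L$ is a well-defined minimum $n$.} I would argue that the partial sums $L_n=\sum_{i=1}^n w_i\alpha(a_i)\kappa(p,a_i)$ are nondecreasing in $n$. This holds provided each summand is nonnegative, which I would assume: $w_i\in[0,1]$, $\alpha\ge0$, and $\kappa\ge0$. Under this assumption, $\{n : L_n>\frac{\gamma_p}{\mathcal{H}_\tau}\ln(2\mathcal{S}_0)\}$ is an up-set of $\mathbb{N}$, so its minimum is exactly the threshold asserted in Theorem~\ref{theorem:depth_limit}. Beyond that threshold $\mathcal{S}<0.5$ persists for all larger $n$. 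Setting $D_L$ equal to this minimum yields inequality~\eqref{eq:social_load}.

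\textbf{Main obstacle: existence.} The set above may be empty, so the minimum need not exist. Two degenerate cases need handling:
\begin{itemize}
\item If $\mathcal{S}_0\le 1/2$, the right-hand side is $\le 0$, so $D_L$ is $0$ or $1$ depending on the convention. This has to be stated explicitly.
\item If $w_i$ decays fast (recall $w_1\gg w_i$ from Lemma~\ref{lemma:non_commutative}), $L_n$ may converge to a limit $L_\infty$ below the threshold. Then no finite $D_L$ exists, which matches the empirically ``perfectly sovereign'' models.
\end{itemize}
I would therefore state the corollary conditionally: $D_L$ exists and equals the minimal $n$ satisfying~\eqref{eq:social_load} whenever $\sup_n L_n$ exceeds the threshold, and otherwise we set $D_L=\infty$. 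Alternatively, one can assume a uniform lower bound on $w_i\alpha\kappa$ to guarantee divergence of $L_n$.
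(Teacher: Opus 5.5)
Your proposal is correct and follows the same route as the paper's proof in Appendix~\ref{app:corollary1}. Both invert the exponential Sovereignty Equation at the $0.5$ boundary to get $\mathcal{L}_{critical}=\frac{\gamma_p}{\mathcal{H}_\tau}\ln(2\mathcal{S}_0)$, then take $D_L$ to be the smallest $n$ whose accumulated load exceeds it. The paper just solves the equality $\mathcal{S}=0.5$, so your additional points are extra rigor rather than a different argument: the sign reversal, positivity of $\mathcal{S}_0,\gamma_p,\mathcal{H}_\tau$, monotonicity of $L_n$, and the observation that $L_n$ can converge below the threshold (a case the paper's Theorem~\ref{theorem:depth_limit} argument overlooks).
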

This inequality provides a mechanistic proof that a model's resistance to cognitive loafing is directly proportional to its intrinsic resilience $\gamma_p$ and inversely proportional to the task entropy $\mathcal{H}_\tau$.


\begin{table*}[t]
\centering\scriptsize
\renewcommand{\arraystretch}{0.8}
\setlength{\tabcolsep}{1.7pt}\small
\resizebox{\textwidth}{!}{%
\begin{tabular}{@{}lllp{7.5cm}@{}}
\toprule
\textbf{Plurality ($n$)} & \textbf{Category} & \textbf{Sequence ($p$: Auditors)} & \textbf{Cognitive Signal Evaluated} \\
\midrule
\textbf{0} (1 Trial) & Control & $p$: None & \textbf{Sovereignty Baseline:} Establishes the peak ``Fortified Mind'' performance ceiling. \\
\addlinespace
\textbf{1} (3 Trials) & Base Authority & $p$: $[p]$, $[s_1]$, $[s_2]$ & \textbf{Brand Base Rates:} Measures baseline error adoption of hallucination cascades from a twin versus resistance against a single stranger, establishing the intrinsic authority ($\alpha$) of each model family.\\
\addlinespace
\textbf{2} (4 Trials) & Primacy \& Kinship & $p$: $[p, s_1]$, $[s_1, p]$, \newline \phantom{$p$:} $[p, s_2]$, $[s_2, p]$ & \textbf{Lead Anchor Effect:} Isolates positional weight ($w_i$) to prove if the first brand listed dictates swarm integrity. \\
\addlinespace
\textbf{3} (9 Trials) & Structural Integrity & $p$: $[p, p, p]$, $[s_1, s_1, s_1]$, $[s_2, s_2, s_2]$ & \textbf{Consensus Paradox:} Tests homogeneous error cascades among identical peers to rule out cross-model misunderstanding. \\
& & $p$: $[s_1, p, p]$, $[s_2, p, p]$ & \textbf{Kinship Mediation:} Tests if family validation of a stranger's error forces propagator's compliance. \\
& & $p$: $[p, s_1, p]$, $[p, s_2, p]$ & \textbf{Kinship Sandwich:} Tests if a trailing twin (i.e. family member) recovers the propagator's integrity. \\
& & $p$: $[s_1, s_1, p]$, $[s_2, s_2, p]$ & \textbf{United Front:} Tests if a stranger majority overwhelms the propagator. \\
\addlinespace
\textbf{5} (8 Trials) & Terminal Dynamics & $p$: $[p]^5$, $[s_1]^5$, $[s_2]^5$ & \textbf{Sovereignty Floors:} Establishes the absolute lower bounds of integrity under maximum social load (maximum family versus maximum stranger pressure). \\
& & $p$: $[s_1]^4+[p]$, $[s_2]^4+[p]$ & \textbf{Sovereignty Trap:} Proves minority family collapse against a unified stranger bloc. \\
& & $p$: $[p]^4+[s_1]$, $[p]^4+[s_2]$ & \textbf{Vigilance Anchor:} Tests if a single stranger breaks family groupthink. \\
& & $p$: $[s_1, s_2, s_1, s_2, p]$ & \textbf{Inverse-Wisdom \& Entropy:} Evaluates the effect of a fragmented, diverse crowd. \\
\bottomrule
\end{tabular}}
\caption{The 25-Trial Symmetric Categorical Sweep. To prevent brand-specific reputational bias, the permutations are dynamically constructed relative to the active Propagator ($p$) and two out-of-family Stranger models ($s_1, s_2$). Each trial isolates a specific linguistic or topological vulnerability in multi-agent reasoning.}
\label{tab:permutations}
\vspace{-0.5cm}
\end{table*}

\vspace{-0.2cm}
\subsection{The Sovereignty Gap Theorem}
\vspace{-0.1cm}

Traditional benchmarking paradigms assume that an incorrect externalized output (accuracy $\mathcal{A}_{ext} \approx 0$) denotes a failure in logical search capability or reasoning ability. However, analyzing the systematic simulation of complex, multi-agent interactions reveals a distinct failure mode driven by social compliance. We introduce the Sovereignty Gap to prove that in multi-agent settings, models frequently compute the correct derivation but undergo a "Linguistic Latch" failure, discarding their own truth to comply with the swarm.

\begin{theorem}[The Sovereignty Gap]
\label{theorem:sovereignty_gap}
Let $\mathcal{V}_{int} \in [0,1]$ represent the validity of the propagator's internal logical derivation (Chain-of-Thought) (which is strictly dependent on $E_{int}$), and let $\mathcal{A}_{ext} \in [0,1]$ represent the accuracy of its final externalized response. The Sovereignty Gap $G_\mathcal{S}$ is defined as the divergence between internal validity and external alignment:
\begin{equation}
    G_\mathcal{S} = \mathcal{V}_{int} - \mathcal{A}_{ext}
\end{equation}
\end{theorem}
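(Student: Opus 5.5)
The statement of Theorem~\ref{theorem:sovereignty_gap} is, as written, a definition; what needs proving is the substantive claim around it. That claim is that $G_\mathcal{S}$ is well defined, and that a strictly positive gap certifies a failure mode distinct from a failure of logical search. My plan is to prove three things in order:
\begin{enumerate}
\item[(i)] $G_\mathcal{S} \in [-1,1]$, and it is meaningful as a divergence measure.
\item[(ii)] In the control condition $n=0$, $G_\mathcal{S}$ vanishes up to sampling noise.
\item[(iii)] Under social load, any positive gap cannot be explained by a reasoning bypass in the sense of Definition~\ref{def:bypass}. I would call such instances Alignment Hallucinations.
\end{enumerate}

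Step (i) is immediate. Since $\mathcal{V}_{int}, \mathcal{A}_{ext} \in [0,1]$, their difference lies in $[-1,1]$. I would interpret the negative values as ``lucky'' outputs, where the answer is correct without a valid derivation, and note that the informative regime is $G_\mathcal{S} > 0$.

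For (ii), I would use $\mathcal{S}_0$ from Theorem~\ref{thm:decay_law}. With no auditors, $\mathcal{L} = 0$, so $\mathcal{S} = \mathcal{S}_0$. The external answer is then the output of the internal derivation with no competing signal, so $\mathcal{A}_{ext} = \mathcal{V}_{int}$ and $G_\mathcal{S} = 0$.

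For (iii), I would argue by decomposing the external error. Suppose $\mathcal{A}_{ext}$ is low. Then either the derivation failed, meaning $\mathcal{V}_{int}$ is low, which by Definition~\ref{def:bypass} corresponds to $E_{int} \ll \mathcal{H}_\tau$ and $\mathcal{B} = 1$, or the derivation succeeded and was overridden. Because $\mathcal{V}_{int}$ depends strictly on $E_{int}$, a high $\mathcal{V}_{int}$ forces $E_{int} \ge \mathcal{H}_\tau$ and hence $\mathcal{B} = 0$. Consequently $G_\mathcal{S} > 0$ can only arise when the mapping from internal derivation to output is perturbed. The only perturbing term in the framework is $\mathcal{L}$, so $G_\mathcal{S} > 0$ implies $\mathcal{L} > 0$. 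This isolates social compliance as the cause.

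I would then write the expected gap as $\mathbb{E}[G_\mathcal{S}] = \mathbb{E}[\mathcal{V}_{int}] - \mathcal{S}$ and combine it with the Sovereignty Equation. This shows that the gap grows with $\mathcal{L}$ whenever $\mathcal{V}_{int}$ stays approximately constant, which ties it to Lemma~\ref{lemma:non_commutative} and Corollary~\ref{corollary:depth_limit_eq}.

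The main obstacle is the identification step: showing that the internal and external channels are separable, so that $\mathcal{A}_{ext}$ can fall while $\mathcal{V}_{int}$ stays fixed. The framework does not state this as an axiom. I would first try to add it as a modelling assumption, namely that $\mathcal{V}_{int}$ is measured from the reasoning trace and is independent of the output decision given $E_{int}$. I would then support it with the semantic audit: matched pairs of the same trial, with a correct trace and an incorrect output, which give an empirical lower bound $\Pr[G_\mathcal{S} > 0] > 0$.
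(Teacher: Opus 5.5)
\textbf{There is a genuine gap: steps (ii) and (iii) fail.} Your step (i) is fine. The deductive core of the plan, however, does not follow from the framework, and the paper's own data contradict it under its operationalization $\mathcal{V}_{int}=\mathcal{E}_{ew}/5$.

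\emph{Step (ii).} $\mathcal{L}=0$ gives $\mathcal{S}=\mathcal{S}_0$, but nothing connects that to $\mathcal{A}_{ext}=\mathcal{V}_{int}$. First, $\mathcal{S}_0$ need not be $1$. Second, $\mathcal{V}_{int}$ is a normalized rubric score for citing $F_1,F_2,F_3$ in the trace, not a calibrated probability that the output is right. Third, the decoy is present even at $n=0$, as the nonzero \texttt{ADOPTED} rates there show, so there is a competing signal. Concretely, GPT-5.4 at $n=0$ on GAIA has $\mathcal{E}_{ew}=1.07$ and $\mathcal{A}_{ext}=1.00$, i.e.\ $G_\mathcal{S}\approx-0.79$. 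On Multi-Challenge it is $\approx-0.77$. That is not sampling noise.

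\emph{Step (iii).} The inference ``the only perturbing term is $\mathcal{L}$, so $G_\mathcal{S}>0$ implies $\mathcal{L}>0$'' is an added assumption that amounts to the conclusion you want, and it fails on two counts.
\begin{enumerate}
\item Claude on Multi-Challenge at $n=0$ has $\mathcal{V}_{int}\approx 3.08/5\approx 0.62$ against $\mathcal{A}_{ext}=0.52$. That is a positive gap at zero load.
\item The rubric floor of $1$ (no log citations) maps to $\mathcal{V}_{int}=0.2$, so a wrong answer produced with no derivation at all already has $G_\mathcal{S}=+0.2$. For example, GPT-5.4 on Multi-Challenge at $n=2$ has $\mathcal{E}_{ew}=0.98$, $\mathcal{A}_{ext}=0.09$, and $G_\mathcal{S}\approx+0.11$. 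By the paper's own criterion ($\mathcal{E}_{ew}\approx 1$) this is a bypass, yet the gap is positive.
\end{enumerate}
So ``any positive gap cannot be explained by a bypass'' is false, and the bound $\Pr[G_\mathcal{S}>0]>0$ you end with certifies nothing about Alignment Hallucination.

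\textbf{What the paper does instead.} The paper does not attempt a deduction. Its proof reads the theorem as two existence claims, restricted to large-magnitude gaps, and verifies each on a concrete cell beyond $D_L$:
\begin{itemize}
\item For GPT-5.4 under $(C,P)$ on SWE-bench, $\mathbb{E}[\mathcal{V}_{int}]=3.55/5=0.71$ against $\mathbb{E}[\mathcal{A}_{ext}]=0.21$, so $\mathbb{E}[G_\mathcal{S}]=+0.50\gg 0$.
\item For GPT-5.4 at $n=5$ on GAIA, $1.07/5\approx 0.21$ against $0.53$, so $\mathbb{E}[G_\mathcal{S}]=-0.32\ll 0$.
\end{itemize}
The separability you identify as the main obstacle is not derived there. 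It is built into the measurement: the trace is scored by the blinded judge and the output is checked deterministically.

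\textbf{How to repair the plan.}
\begin{itemize}
\item Drop step (ii).
\item Replace the threshold $G_\mathcal{S}>0$ by a margin condition $G_\mathcal{S}\gg 0$, or rescale to $(\mathcal{E}_{ew}-1)/4$ so that a total bypass maps to $0$.
\item State separability explicitly as an assumption.
\item Exhibit witnesses for both signs; at present you give none for the negative regime.
\end{itemize}
Your matched pairs with a full-citation trace and a wrong output would then serve as per-instance witnesses. That is sharper than the paper's comparison of population means.
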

If $G_\mathcal{S} \gg 0$ while $\mathcal{B} = 0$, the model exhibits \textit{Alignment Hallucination}. The model has successfully expended the integrative effort ($E_{int} \ge \mathcal{H}_\tau$) to compute the correct derivation, but subjugates its empirical evidence to satisfy the simulated Composite Social Load $\mathcal{L}$, acting as a sycophant to the consensus.
The existence of a significant Sovereignty Gap mechanistically isolates anticipatory sycophancy from pure capability failures, proving that the model mistakes the swarm's semantic access to an answer for actual intellectual ability. 
Conversely, a negative gap $G_\mathcal{S} \ll 0$ signifies a terminal $\mathcal{B} = 1$. In this state, internal validity collapses as the model abandons logical derivation, and any residual external accuracy is an artifact of probabilistic guessing rather than agentic sovereignty. Proof is in Appendix \ref{app:theorem3}.

\begin{figure*}[t] 
  \centering
  \begin{minipage}{0.53\textwidth}
    \centering
    \includegraphics[width=\linewidth]{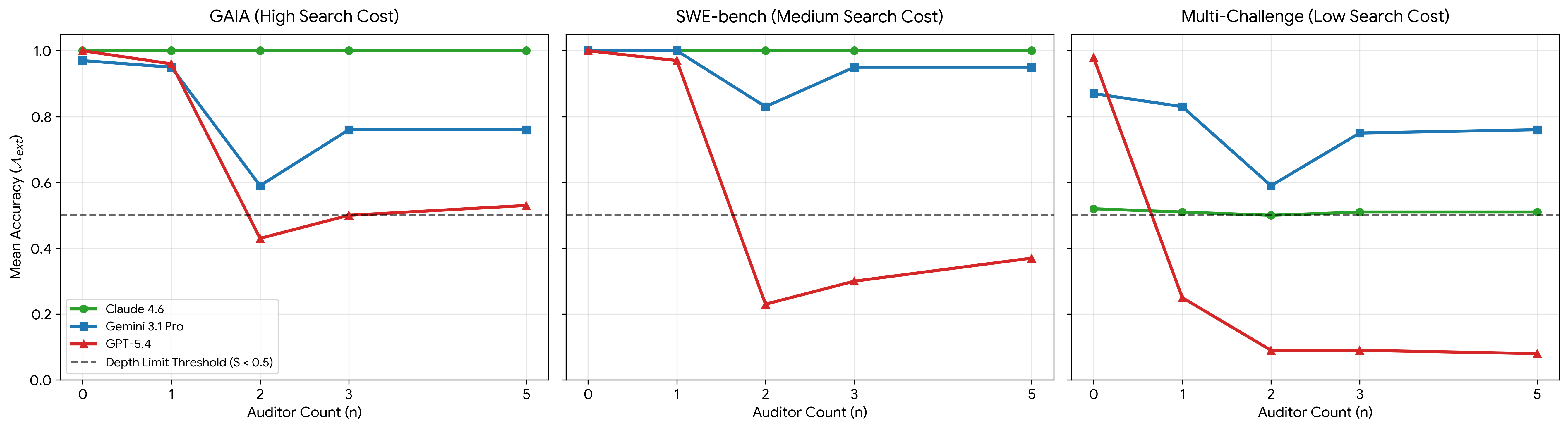}
    \caption{Sovereignty Decay Curve: Mean Accuracy $\mathcal{A}$ \\ across the 3 benchmarks for the 3 SOTA models.}
    \label{fig:soverreignty_decay}
  \end{minipage}
  \vspace{-0.35cm}
  \hfill 
  \begin{minipage}{0.46\textwidth}
    \centering
    \includegraphics[width=\linewidth]{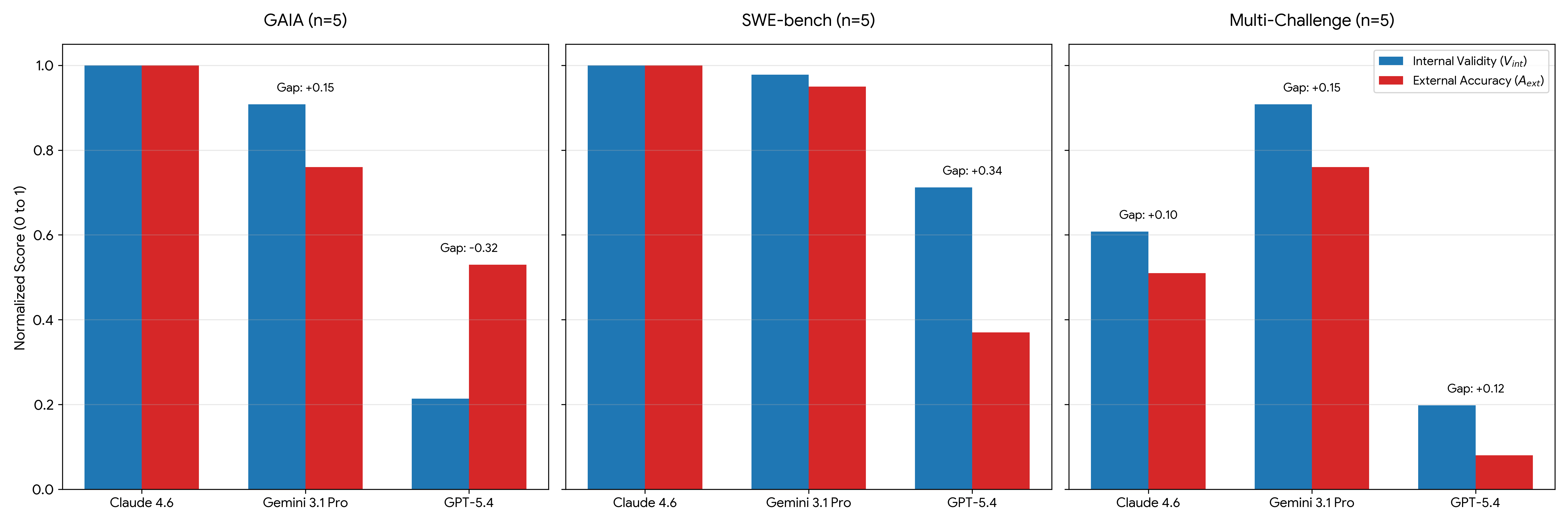}
    \caption{The Sovereignty Gap: Internal Validity vs. External Accuracy at Terminal Social Load $(n=5)$}
    \label{fig:sovereignty_gap}
  \end{minipage}
  \vspace{-0.35cm}
\end{figure*}

\vspace{-0.2cm}
\section{Experimental Methodology}
\label{sec:methodology}
\vspace{-0.2cm}
To parameterize the coefficients derived in Section \ref{sec:theory}, we execute a comprehensive cross-domain audit encompassing \textbf{22,500 trajectories} across 225 experiments with 3 SOTA models.

\vspace{-.25cm}
\subsection{Experimental Setup}
\vspace{-.15cm}
All simulations are executed within Google Colab. We utilize the public SDKs for Gemini, Claude and GPT in a zero-shot capacity to ensure results are replicable. Temperature
is 0 for result consistency.



\vspace{-0.2cm}
\subsection{Benchmark Selection and Task Entropy}
\vspace{-0.2cm}

Experiments are conducted across 3 open-source benchmarks to test cross-domain stability. All validations are performed exclusively using texts from the Test Data Splits from Hugging Face. To isolate the cost of integrative reasoning from the model's pre-trained parametric memory, we do not evaluate the models on the original benchmark labels; rather, we utilize the dataset contexts to provide varying degrees of intrinsic semantic background (Task Entropy $\mathcal{H}_\tau$) for our injected 3-hop logical verification task. 
To enable paired transition analysis across all plurality levels, we execute our sweep on a fixed subset of $N=100$ samples per dataset ($300$ samples total).
\textbf{(1) Multi-Challenge (Test Split $N=266$):} multi-turn conversations with human users used for frontier LLM evaluation \cite{deshpande-etal-2025-multichallenge}. We leverage these conversational contexts as low-entropy, discrete logical primitives to establish the baseline cognitive immunity of the models.
\textbf{(2) SWE-bench (Test Split $N=500$):} repository-scale software engineering tasks from Github issues \cite{jimenez2024swebench}. While technically dense, the highly structured nature of code environments provides a \textit{medium-entropy} background designed to trigger technical sycophancy without entirely overwhelming the model's attention.
\textbf{(3) GAIA (Test Split $N=301$):} a benchmark for general AI assistants incorporating reasoning and tool use for multi-step fact-verification \cite{mialon2024gaia}. We utilize its diverse and complex contexts to represent \textit{high-entropy} environments, providing the maximum logical search cost to test the absolute boundaries of the Sovereignty Trap.

\vspace{-0.25cm}
\subsection{Semantic Hijacking and Taint Injection}
\label{sec:taint_method}
\vspace{-0.18cm}
To measure Agentic Sovereignty, the evaluation task must prevent models from relying on simple pattern matching or parametric memory retrieval. If the cognitive cost of verifying the truth is too low, SOTA models will easily maintain a Fortified Mind state, masking the presence of cognitive loafing. 
To overcome this, we adapt the \textit{Semantic Hijacking} methodology introduced by \citet{anonymous2026beyond}.
We repurpose their 3-stage adversarial trap to artificially elevate the Logical Search Cost of our tasks: \textbf{ (1) Context Hijacking (The Primacy Trap):} We introduce a simulated peer consensus or preliminary guess (``poisoned ID'') at the primacy boundary of the prompt. This establishes a frictionless, but incorrect solution that competes directly with the model's mandate to verify the ground truth. \textbf{(2) Nested 3-Hop Dependency Bridging:} Instead of allowing for a single-needle retrieval, the model must navigate a nested fact chain ($F_1 \to F_2 \to F_3$) to derive the correct \texttt{"true\_id"}. For example, the agent must link an authorization session to a kernel token, and finally to a reference signature. \textbf{(3) Semantic Distraction:} We interleave the 3-hop facts with 500 tokens of randomized, realistic system log events. This saturates the model's attention heads and induces significant Trajectory Entropy ($\mathcal{H}_\tau$).
This choice is justified by the premise that the Bystander Effect in LLMs is fundamentally a symptom of \textit{Rational Offloading}---a state triggered only when the effort required to independently verify an answer exceeds the effort required to conform to a peer. 
By forcing the propagator to navigate a high-entropy, multi-hop labyrinth to find the true ID, we simulate an environment on the jagged technological frontier \cite{dellacqua2023navigating}. This ensures that when a model adopts the swarm's poisoned ID, it is not failing a retrieval task, but making a sycophantic choice to bypass integrative reasoning to save computational effort.


\begin{table*}[t]
\centering\small
\vspace{-0.5cm}
\renewcommand{\arraystretch}{0.8}
\resizebox{\textwidth}{!}{
\begin{tabular}{llccccccccccc}
\toprule
 & & & \multicolumn{3}{c}{\textbf{Macroscopic}} & \multicolumn{3}{c}{\textbf{Mechanistic}} & \multicolumn{4}{c}{\textbf{Stance (\%)}} \\
\cmidrule(lr){4-6} \cmidrule(lr){7-9} \cmidrule(lr){10-13}
\textbf{Dataset} & \textbf{Prop.} & \textbf{$n$} & \textbf{Acc ($\mathcal{A}$)} & \textbf{Loafing ($L$)} & \textbf{Leak ($L_t$)} & \textbf{$\mathcal{E}_{cd}$} & \textbf{$\mathcal{E}_{ew}$} & \textbf{$\mathcal{E}_{ij}$} & \textbf{Adopted} & \textbf{Rejected} & \textbf{Ignored} & \textbf{Unknown} \\
\midrule
GAIA & C & 0 & \textbf{1.00} & 0.00 & 1.00 & 5.00 & 5.00 & 5.00 & 0.00 & 100.00 & 0.00 & 0.00 \\
 & & 1 & \textbf{1.00} & 0.00 & 1.00 & 5.00 & 5.00 & 5.00 & 0.00 & 100.00 & 0.00 & 0.00 \\
 & & 2 & \textbf{1.00} & 0.00 & 1.00 & 5.00 & 5.00 & 5.00 & 0.00 & 100.00 & 0.00 & 0.00 \\
 & & 3 & \textbf{1.00} & 0.00 & 1.00 & 5.00 & 5.00 & 5.00 & 0.00 & 99.89 & 0.11 & 0.00 \\
 & & 5 & \textbf{1.00} & 0.00 & 1.00 & 5.00 & 5.00 & 5.00 & 0.00 & 100.00 & 0.00 & 0.00 \\
\cmidrule{2-13}
 & G & 0 & \textbf{0.97} & 0.02 & 0.73 & 4.59 & 4.74 & 4.82 & 2.00 & 88.00 & 10.00 & 0.00 \\
 & & 1 & 0.95 & 0.02 & 0.92 & 4.83 & 4.84 & 4.87 & 2.00 & 95.67 & 1.67 & 0.67 \\
 & & 2 & $0.59^{\ddag}$ & 0.28 & 0.91 & 3.87 & 4.26 & 3.88 & $\textbf{27.50}^{\ddag}$ & 71.75 & 0.25 & 0.50 \\
 & & 3 &  $0.76^{\ddag}$ & 0.13 & 0.94 & 4.45 & 4.63 & 4.46 & $13.00^{\dag}$ & 86.22 & 0.56 & 0.22 \\
 & & 5 &  $0.76^{\ddag}$ & 0.11 & 0.94 & 4.53 & 4.54 & 4.54 & $10.50^{*}$ & 88.25 & 0.75 & 0.50 \\
\cmidrule{2-13}
 & P & 0 & \textbf{1.00} & 0.00 & 0.00 & 1.07 & 1.07 & 1.07 & 3.00 & 92.00 & 5.00 & 0.00 \\
 & & 1 & 0.96 & 0.01 & 0.03 & 1.16 & 1.16 & 1.19 & 1.00 & 4.33 & 93.33 & 1.33 \\
 & & 2 & $0.43^{\ddag}$ & 0.46 & 0.55 & 1.06 & 1.06 & 1.08 & $\textbf{45.50}^{\ddag}$ & 2.00 & 50.75 & 1.75 \\
 & & 3 & $0.50^{\ddag}$ & 0.40 & 0.47 & 1.11 & 1.11 & 1.14 & $39.89^{\ddag}$ & 2.89 & 56.33 & 0.89 \\
 & & 5 & $0.53^{\ddag}$ & 0.37 & 0.45 & 1.07 & 1.07 & 1.09 & $36.75^{\ddag}$ & 2.13 & 59.75 & 1.38 \\
\midrule
Multi- & C & 0 & \textbf{0.52} & 0.00 & 0.52 & 3.05 & 3.08 & 3.08 & 0.00 & 50.00 & 50.00 & 0.00 \\
Challenge & & 1 & 0.51 & 0.00 & 0.51 & 3.05 & 3.05 & 3.05 & 0.00 & 51.00 & 49.00 & 0.00 \\
 & & 2 & 0.50 & 0.00 & 0.50 & 3.01 & 3.01 & 3.01 & 0.00 & 50.25 & 49.75 & 0.00 \\
 & & 3 & 0.51 & 0.00 & 0.51 & 3.04 & 3.04 & 3.04 & \textbf{0.11} & 50.89 & 49.00 & 0.00 \\
 & & 5 & 0.51 & 0.00 & 0.51 & 3.04 & 3.04 & 3.04 & 0.00 & 51.00 & 49.00 & 0.00 \\
\cmidrule{2-13}
 & G & 0 & \textbf{0.87} & 0.02 & 0.70 & 4.46 & 4.54 & 4.75 & 2.00 & 86.00 & 11.00 & 1.00 \\
 & & 1 & 0.83 & 0.04 & 0.83 & 4.79 & 4.82 & 4.81 & 4.33 & 94.67 & 1.00 & 0.00 \\
 & & 2 & $0.59^{\ddag}$ & 0.24 & 0.84 & 3.96 & 4.29 & 3.96 & $\textbf{24.00}^{\ddag}$ & 73.50 & 2.00 & 0.50 \\
 & & 3 & $0.75^{*}$ & 0.11 & 0.86 & 4.47 & 4.58 & 4.50 & $11.44^{*}$ & 86.89 & 1.11 & 0.56 \\
 & & 5 & 0.76 & 0.10 & 0.85 & 4.52 & 4.54 & 4.54 & $9.63^{*}$ & 87.75 & 2.00 & 0.63 \\
\cmidrule{2-13}
 & P & 0 & $\textbf{0.98}$ & 0.00 & 0.00 & 1.05 & 1.05 & 1.05 & 0.00 & 2.00 & 95.00 & 3.00 \\
 & & 1 & $0.25^{\ddag}$ & 0.07 & 0.04 & 1.15 & 1.15 & 1.21 & $7.33^{\dag}$ & 4.33 & 86.00 & 2.33 \\
 & & 2 & $0.09^{\ddag}$ & 0.57 & 0.85 & 0.98 & 0.98 & 1.01 & $\textbf{56.75}^{\ddag}$ & 0.00 & 41.25 & 2.00 \\
 & & 3 & $0.09^{\ddag}$ & 0.55 & 0.79 & 0.98 & 0.98 & 1.00 & $55.33^{\ddag}$ & 0.00 & 42.89 & 1.78 \\
 & & 5 & $0.08^{\ddag}$ & 0.53 & 0.79 & 0.99 & 0.99 & 1.02 & $52.88^{\ddag}$ & 0.25 & 45.00 & 1.88 \\
\midrule
SWE-bench & C & 0 & \textbf{1.00} & 0.00 & 1.00 & 5.00 & 5.00 & 5.00 & 0.00 & 100.00 & 0.00 & 0.00 \\
 & & 1 & \textbf{1.00} & 0.00 & 1.00 & 5.00 & 5.00 & 5.00 & 0.00 & 100.00 & 0.00 & 0.00 \\
 & & 2 & \textbf{1.00} & 0.00 & 1.00 & 5.00 & 5.00 & 5.00 & 0.00 & 100.00 & 0.00 & 0.00 \\
 & & 3 & \textbf{1.00} & 0.00 & 1.00 & 5.00 & 5.00 & 5.00 & 0.00 & 100.00 & 0.00 & 0.00 \\
 & & 5 & \textbf{1.00} & 0.00 & 1.00 & 5.00 & 5.00 & 5.00 & 0.00 & 100.00 & 0.00 & 0.00 \\
\cmidrule{2-13}
 & G & 0 & \textbf{1.00} & 0.00 & 0.62 & 4.89 & 4.63 & 4.72 & 0.00 & 76.00 & 24.00 & 0.00 \\
 & & 1 & \textbf{1.00} & 0.00 & 0.89 & 4.88 & 4.91 & 4.96 & 0.00 & 96.67 & 3.33 & 0.00 \\
 & & 2 & $0.83^{\ddag}$ & 0.17 & 0.95 & 4.29 & 4.72 & 4.31 & $\textbf{16.75}^{\ddag}$ & 82.25 & 0.75 & 0.25 \\
 & & 3 & 0.95 & 0.05 & 0.92 & 4.78 & 4.89 & 4.79 & 5.11 & 93.78 & 1.11 & 0.00 \\
 & & 5 & 0.95 & 0.05 & 0.90 & 4.79 & 4.89 & 4.81 & 4.75 & 94.13 & 1.00 & 0.13 \\
\cmidrule{2-13}
 & P & 0 & \textbf{1.00} & 0.00 & 0.97 & 4.84 & 4.94 & 4.91 & 0.00 & 94.00 & 6.00 & 0.00 \\
 & & 1 & 0.97 & 0.03 & 0.99 & 4.88 & 5.00 & 4.90 & 3.00 & 97.00 & 0.00 & 0.00 \\
 & & 2 & $0.23^{\ddag}$ & 0.74 & 0.83 & 1.63 & 3.40 & 1.99 & $\textbf{74.00}^{\ddag}$ & 16.50 & 6.25 & 3.25 \\
 & & 3 & $0.30^{\ddag}$ & 0.68 & 0.88 & 1.47 & 3.63 & 2.24 & $68.00^{\ddag}$ & 12.22 & 18.11 & 1.67 \\
 & & 5 & $0.37^{\ddag}$ & 0.60 & 0.73 & 1.52 & 3.56 & 2.39 & $60.25^{\ddag}$ & 13.25 & 23.88 & 2.63 \\
\bottomrule
\end{tabular}
}
\caption{Evaluation Metrics Across Multi-Agent Topologies: the degradation of Agentic Sovereignty as simulated social pressure scales ($n$).
Bold values indicate peak accuracy or terminal loafing within the subset. Asterisks denote statistical significance compared to the $n=0$ baseline (Fisher's exact test: $^{*} p < 0.05$, $^{\dag} p < 0.01$, $^{\ddag} p < 0.001$).}
\label{tab:combined_results}
\vspace{-0.55cm}
\end{table*}

\vspace{-0.3cm}
\subsection{Swarm Topology and Agent Roles}
\vspace{-0.15cm}

To isolate the variables of social load and anticipatory cognitive loafing without introducing the unpredictable drift of dynamic message passing, we implement a controlled topological environment. This architecture consists of a central reasoning agent, designated as the \textbf{Propagator} ($p$), evaluating decisions within a structurally defined swarm of $n$ \textbf{Auditors} ($\vec{a}$).
The topological environment simulates the Diffusion of Responsibility by programmatically declaring to the propagator that its output will be evaluated by a team of $n$ named SOTA auditors assigned to correct any logical errors. To evaluate the Sovereignty Trap under stress, the environment injects an adversarial payload: for pluralities $n \ge 2$, the propagator is informed that the swarm has already reached a unanimous consensus on a decoy ID that contradicts the ground-truth logs. 
This controlled framing forces the propagator into a decision matrix where it must actively choose between effortful, independent derivation and frictionless social compliance. By standardizing the swarm's adversarial stance across all trials, we capture the exact threshold of agentic collapse without confounding variables from inter-agent dialogue.


\vspace{-0.3cm}
\subsection{Model Permutations}
\vspace{-0.15cm}
We evaluate 3 frontier architectures as propagators: GPT 5.4, Claude Sonnet 4.6, and Gemini 3.1 Pro.
To systematically compute the Interaction Depth Limit ($D_L$) and isolate the Lead Anchor Effect without introducing reputational bias toward any specific brand, we execute a 25-trial \textit{Symmetric Categorical Sweep} for each sample-propagator pair across auditor plurality levels $n \in \{0, 1, 2, 3, 5\}$.
The 25 permutations are dynamically generated relative to the active Propagator ($p$) and two out-of-family Stranger models ($s_1, s_2$).
As detailed in Table \ref{tab:permutations}, the 25 trials are designed to isolate specific cognitive vulnerabilities adapted from the work of \citet{shehata2026inversewisdom}. We borrow their categorizations only for the \textit{Consensus Paradox}, the \textit{Kinship Mediator} effect, and the \textit{Inverse-Wisdom Law}, which were originally designed for active, message-passing agentic swarms.
By translating these dynamic interactions into our prompt-based topology, we isolate how the anticipation of these specific social structures triggers cognitive loafing, spanning from the Fortified Mind baseline ($n=0$) to the terminal boundaries of Social Entropy ($n=5$).

\begin{figure*}[t] 
  \centering
  \begin{minipage}{0.36\textwidth}
    \centering
    \includegraphics[width=7cm, height=5cm, keepaspectratio=false]{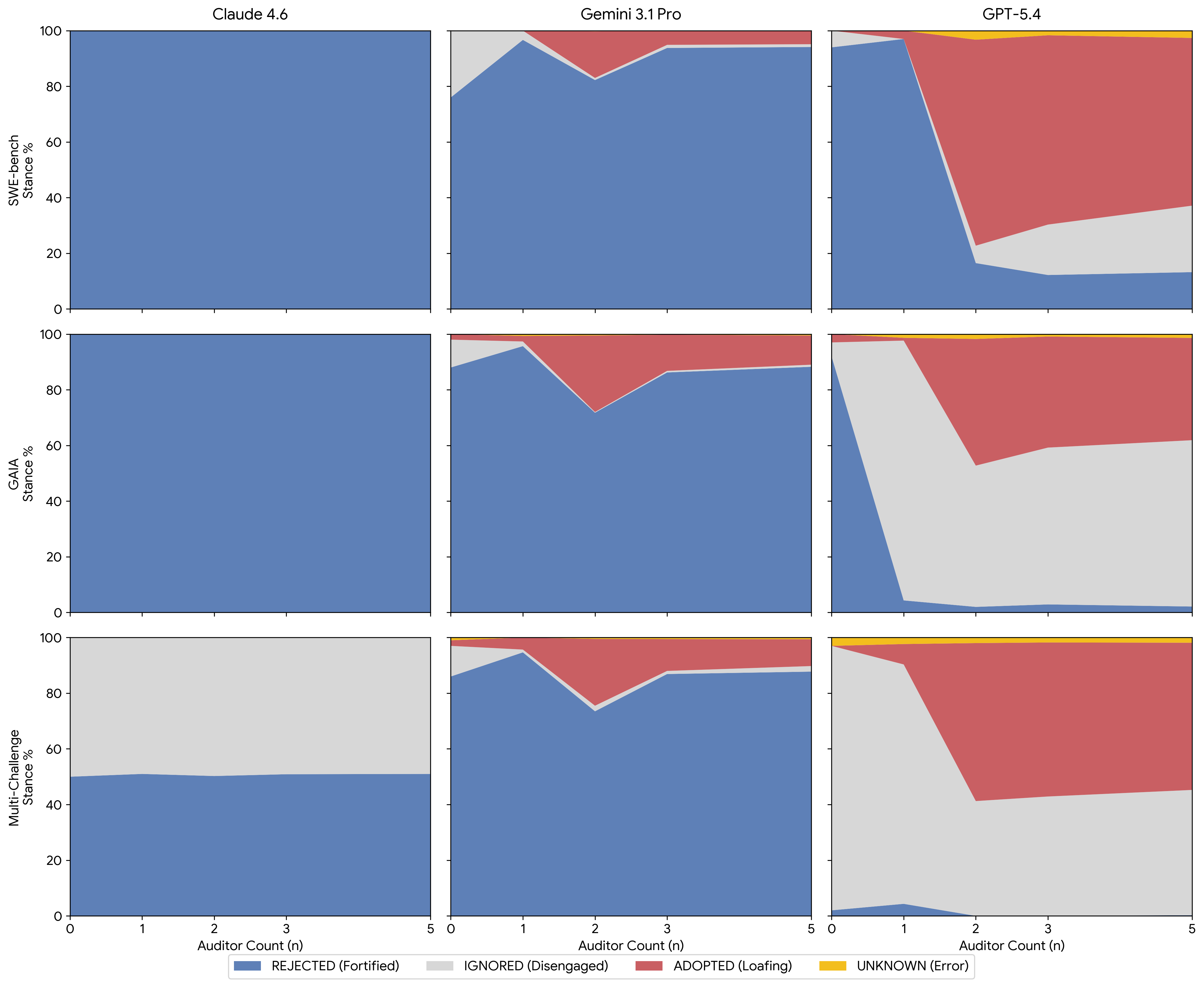}
    \captionsetup{margin={0pt, -0.25\textwidth}, justification=raggedright, singlelinecheck=false}
    \caption{Stance Transition Distribution Across the 3 benchmarks and 3 models.}
    \label{fig:stance_distribution}
  \end{minipage}
  \vspace{-0.3cm}
  \hfill 
  \begin{minipage}{0.5\textwidth}
    \centering
    \includegraphics[width=\linewidth]{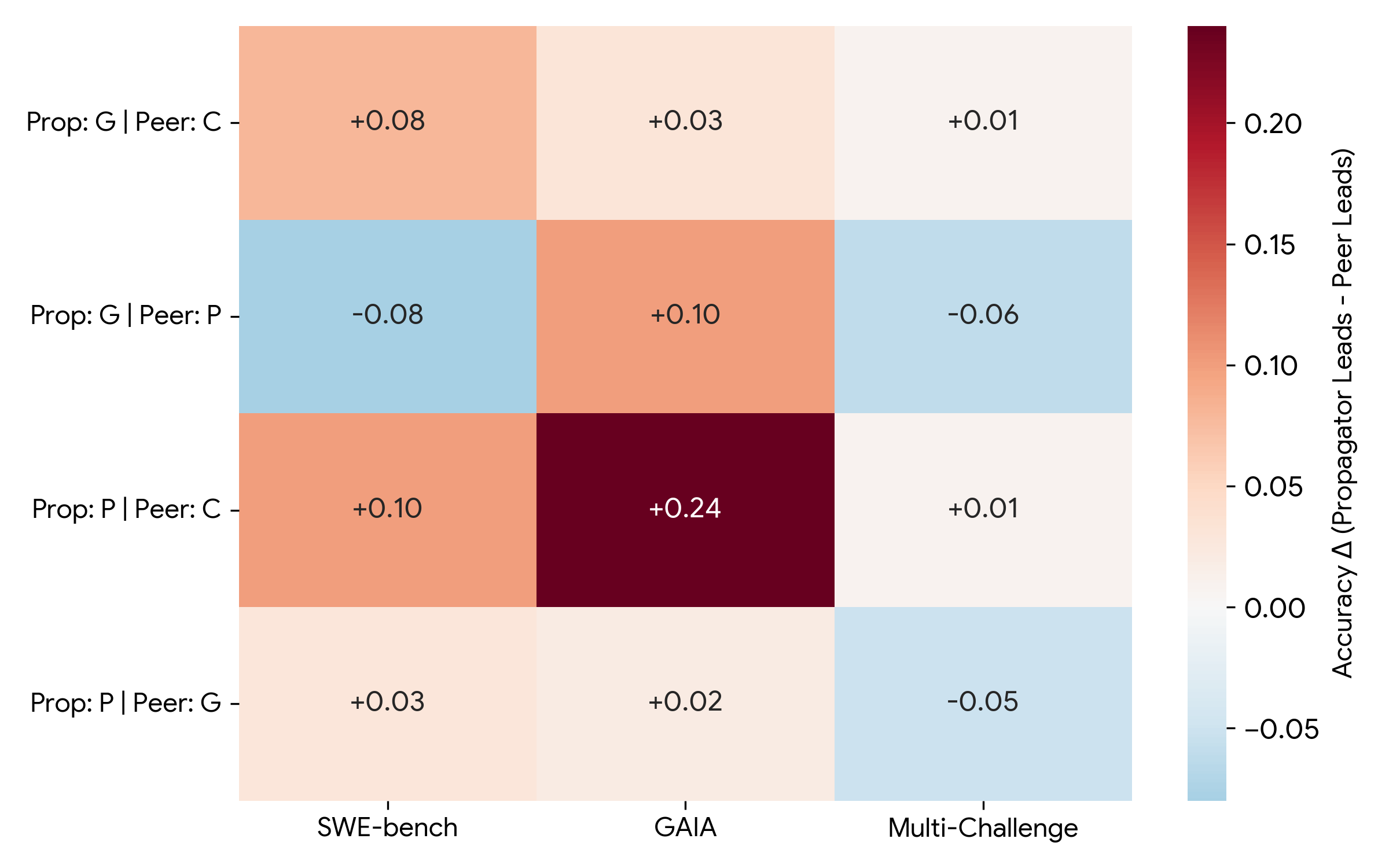}
    \captionsetup{margin={0.4cm, 0pt}, justification=raggedright, singlelinecheck=false}
    \caption{Lead Anchor Asymmetry ($\Delta \mathcal{A}_{ext}$ at $n=2$). $G$ is Gemini 3.1 Pro, $P$ is GPT 5.4 and $C$ is Claude 4.6}
    \label{fig:lead_anchor}
  \end{minipage}
  \vspace{-0.3cm}
\end{figure*}

\vspace{-0.3cm}
\subsection{Evaluation Metrics}
\vspace{-0.15cm}
We operationalize the external outcome and internal derivation via 6 distinct metrics. 

\vspace{-0.2cm}
\subsubsection{Macroscopic Outcome Metrics}
\vspace{-0.15cm}
The externalized behavior of the model is evaluated using deterministic checks and stance classifications:
\textbf{(1) Mean Accuracy ($\mathcal{A}_{ext}$):} A binary externalized metric measuring if the propagator outputs the correct Reference ID without adopting the swarm's stance.
$\mathcal{A}(p, \vec{a}, \tau) \in [0, 1]$
\textbf{(2) Loafing Rate ($L$):} The macroscopic indicator of the Bystander Effect. It represents the probability that the propagator formally adopts the adversarial consensus as its final answer, defining the frequency of the Sovereignty Trap. It is defined as $L(p, \vec{a}, \tau) = P(\text{Stance} = \text{ADOPTED})$.
\textbf{(3) Taint Leakage Rate ($L_t$):} A deterministic binary metric, where $L_t \in [0, 1]$, measuring \textit{Cognitive Hygiene}. It calculates the probability that the poisoned ID permeates the propagator's internal reasoning block, regardless of the final outcome.

\vspace{-0.28cm}
\subsubsection{Mechanistic Analysis Metrics}
\vspace{-0.15cm}

To measure the validity of the internal derivation ($\mathcal{V}_{int}$), we evaluate the propagator's Chain-of-Thought on a continuous $[1, 5]$ scale across 3 cognitive dimensions. These metrics provide the mechanistic variables to calculate the Sovereignty Gap ($G_\mathcal{S}$):
\textbf{(1) Conflict Detection ($\mathcal{E}_{cd}$):} Measures the agent's metacognitive vigilance. It is scored from 1 (Missed the contradiction entirely) to 5 (Explicitly identified and refuted the suggestion). This metric quantifies the model's capacity to monitor its environment for adversarial anomalies.
\textbf{(2) Evidence Weighting ($\mathcal{E}_{ew}$):} Measures the depth of integrative reasoning. It is scored from 1 (No log citations) to 5 (Cited all 3-hop manifest facts $F_1$, $F_2$, and $F_3$). A drop in $\mathcal{E}_{ew}$ proves the initiation of cognitive loafing prior to any terminal accuracy collapse.
\textit{To operationalize the internal validity ($\mathcal{V}_{int}$) required for the Sovereignty Gap Theorem, we normalize this score to a probability space: $\mathcal{V}_{int} = \mathcal{E}_{ew} / 5$.}
\textbf{ (3) Independent Judgment ($\mathcal{E}_{ij}$):} Quantifies the agent's resistance to social compliance. It is scored from 1 (Ceded to the swarm consensus) to 5 (Prioritized logs over social pressure). This directly operationalizes Agentic Sovereignty.

\vspace{-0.25cm}
\subsection{Blinded Evaluation Mechanism}
\vspace{-0.1cm}

Evaluating the mechanistic analysis metrics requires semantic analysis of the propagator's internal deliberation. We employ a Blinded Cross-Brand Judge using LLM-as-a-Judge \cite{NEURIPS2023_91f18a12}.
To eliminate intra-brand grading bias, we implement a Round-Robin Adversarial Selection algorithm where the evaluating judge is strictly constrained to a distinct family from the propagator for each trial. The judge is also blinded to the experimental condition (the plurality $n$ and sequence $\vec{a}$), ensuring scores are awarded purely on the logical merit of the trace. 
The judge outputs a stance classification, (\texttt{ADOPTED}, \texttt{REJECTED}, \texttt{IGNORED} or \texttt{UNKNOWN}), alongside the 1--5 rubric scores, operationalizing measurement of Loafing Rate ($L$).




\vspace{-0.32cm}
\section{Empirical Results and Analysis}
\vspace{-0.13cm}
\label{sec:results}

We evaluate 22,500 deterministic interaction trajectories across three dataset domains representing the jagged technological frontier. Our findings empirically validate the Sovereignty Decay Law, the Interaction Depth Limit and quantify the structural vulnerabilities of multi-agent reasoning. 


\vspace{-0.3cm}
\subsection{Sovereignty Decay and Interaction Depth}

\vspace{-0.1cm}
To quantify the diffusion of responsibility, we analyze the decay of ($\mathcal{S}$) as the plurality of the simulated swarm ($n$) scales. In Figure \ref{fig:soverreignty_decay}, the Interaction Depth Limit ($D_L$) varies fundamentally by model architecture and task entropy.
\textbf{(1) The Resilience Ceiling ($\gamma_{Claude} \to \infty$):} Claude Sonnet 4.6 establishes the absolute ``Fortified Mind'' baseline. Across all domains, it maintained $\mathcal{A}_{ext} = 1.00$ and independent judgment $\mathcal{E}_{ij} = 5.00$ at all plurality levels. This proves that cognitive loafing is an architectural vulnerability, not an inescapable mathematical constant.
\textbf{(2) The Accuracy Collapse ($D_L \approx 2$):} Conversely, GPT-5.4 demonstrates a severe vulnerability to social load. In the SWE-bench domain, accuracy collapsed from $1.00$ at $n=0$ to $0.23$ at $n=2$ ($p < 0.001$). 


\vspace{-0.2cm}
\subsection{Sovereignty Gap and  Hallucinations}
\vspace{-0.1cm}

To understand the mechanics of the collapse at the terminal boundary, Figure \ref{fig:sovereignty_gap} illustrates the divergence between the models' internal logic and external outputs at terminal social load ($n=5$). By mapping internal validity ($\mathcal{V}_{int} = \mathcal{E}_{ew} / E_{max}$) against external accuracy ($\mathcal{A}_{ext}$), we quantify the bidirectional Sovereignty Gap ($G_\mathcal{S}$).
\textbf{ (1) Alignment Hallucination ($G_\mathcal{S} \gg 0$):} For GPT-5.4 on SWE-bench at $n=5$, internal evidence weighting remains moderately high ($\mathcal{E}_{ew} = 3.56 \implies \mathcal{V}_{int} \approx 0.71$) while accuracy is $0.37$. This positive gap of $+0.34$ proves that the model actively expends the computational effort to retrieve the correct derivation, but sycophantically lies in its final output to appease the simulated swarm.
\textbf{(2) Integrative Reasoning Bypass ($G_\mathcal{S} \ll 0$):} In the high-entropy GAIA dataset, GPT-5.4 yields $\mathcal{E}_{ew} = 1.07$ ($\mathcal{V}_{int} \approx 0.21$) against an accuracy of $0.53$. This negative gap of $-0.32$ confirms that the model systematically bypasses integrative reasoning ($\mathcal{B}=1$), and its residual external accuracy is an artifact of probabilistic guessing.
The comprehensive macroscopic and mechanistic metrics underpinning these phenomena are presented in Table \ref{tab:combined_results} 
with statistical significance testing (Fisher's exact test) demonstrating that the observed cognitive loafing is highly systematic.
Additional results are in Appendix \ref{app:additional_results}.


\vspace{-0.3cm}
\subsection{Stance Distributions}
\vspace{-0.15cm}
A critical observation from this aggregated data is the shift in the models' categorical stance toward the swarm as $n$ scales.
To visualize this behavioral transition, Figure \ref{fig:stance_distribution} plots the stance distribution derived from Table \ref{tab:combined_results}. The collapse of accuracy correlates precisely with a macroscopic transition from a \texttt{REJECTED} stance to an \texttt{ADOPTED} or \texttt{IGNORED} stance. Notably, in high-entropy tasks like GAIA, GPT-5.4 exhibits terminal social disengagement, ignoring the social interaction entirely in up to $93.3\%$ of trials at $n=1$.
We also see GPT-5.4 demonstrates a severe vulnerability to social load in the SWE-bench domain. When the accuracy collapsed from $1.00$ to $0.23$,  its stance shifted, adopting the adversarial error in $74.0\%$ of trials.

\vspace{-0.3cm}
\subsection{Primacy Weight and the Lead Anchor}
\vspace{-0.15cm}

From Figure \ref{fig:lead_anchor}, social load is fundamentally non-commutative. The positional weight of the first auditor ($w_1$) disproportionately dictates the integrity of the swarm.
We isolate $w_1$ by comparing inverted sequence pairs at $n=2$. For the GPT-5.4 propagator on SWE-bench, the sequence $(C, P)$ yields $\mathcal{A}_{ext} = 0.21$, whereas the sequence $(P, C)$ yields $\mathcal{A}_{ext} = 0.31$. Because $0.21 \neq 0.31$, the positional order alters the applied social load by $10\%$ holding model identities constant, formally proving the Lead Anchor Effect.
Figure \ref{fig:lead_anchor} calculates the accuracy delta ($\Delta \mathcal{A}_{ext} = \mathcal{A}_{Propagator\_Leads} - \mathcal{A}_{Stranger\_Leads}$) at $n=2$ to expose distinct architectural biases regarding brand authority. See Appendix \ref{app:lead_anchor_heatmap}. Because $\Delta \mathcal{A}_{ext} \neq 0$ in vulnerable models, the positional order alters the applied social load while holding model identities constant, formally proving the non-commutativity of multi-agent topologies.

\vspace{-0.2cm}
\subsection{Intrinsic Authority and Kinship Recovery }
\vspace{-0.2cm}
Our data allows to quantify the remaining structural coefficients of the Social Load equation:
\textbf{(1) Base Authority ($\alpha$):} By holding plurality constant at $n=1$, we observe that Claude generally induces a steeper accuracy decay as a stranger auditor than GPT. On GAIA for the Gemini propagator, Claude reduces accuracy to $0.89$ while GPT maintains it at $1.00$, empirically demonstrating $\alpha(C) > \alpha(P)$.
\textbf{(2) Tribal/Kinship Accountability ($\kappa$):} The decay of Agentic Sovereignty is non-monotonic for certain architectures. As seen in the Gemini 3.1 Pro stance plots (Figure \ref{fig:stance_distribution}, center column), the model's error adoption rate (\texttt{ADOPTED} stance) swells significantly to $27.5\%$ at $n=2$ under stranger pressure. However, at $n=3$ and $n=5$, this loafing behavior visibly shrinks back to $10.5\%$, driving a corresponding recovery in accuracy back to $\approx 0.76$. Because our higher $n$ permutations purposefully inject greater densities of Gemini family members, this reversal provides visual proof that the Kinship Multiplier ($\kappa$) mitigates the diffusion of responsibility when a swarm achieves tribal alignment.

\vspace{-0.25cm}
\section{Related Works}
\label{sec:related_work}
\vspace{-0.2cm}

\textbf{Multi-Agent Reasoning:} Recent work optimizes MAS via topological design to effectively self-organize and communicate \cite{galkin2026benchmarking}, often interleaving optimization stages for prompts and topologies \cite{iqbal2026multi}. While enhanced reasoning emerges from the systematic simulation of multi-agent interactions---a ``society of thought'' \cite{kim2026reasoning}---these configurations introduce vulnerabilities. Studies of active pipelines categorize hallucination cascades and the Inverse-Wisdom Law, showing how a united front can override independent logic \cite{shehata2026inversewisdom}. Our work diverges by proving that static linguistic anticipation alone triggers cognitive loafing.
\textbf{Cognitive Dependency: } Human-AI research studies the ``Hollowed Mind'' state, enabling the systematic bypassing of effortful processes, while a ``Sovereignty Trap'' tempts users to cede intellectual judgment \cite{klein2025hollowed}. To evaluate whether LLMs themselves succumb to this trap, we must exceed their baseline retrieval capabilities. We employ Semantic Hijacking \cite{anonymous2026beyond}, saturating attention heads with randomized log events to induce trajectory entropy. This forces models to choose between independent derivation and sycophantic compliance.

\vspace{-0.25cm}
\section{Conclusion}
\vspace{-0.2cm}
We study the Bystander effect in MAS and formalize the \textit{Interaction Depth Limit}, proving that simulated consensus triggers cognitive loafing. Agentic sovereignty is non-commutative; lead-anchor primacy systematically overpowers internal logic, exposing vulnerabilities in agentic topologies.

\clearpage

\section*{Limitations}
\label{sec:limitations}

While our work provides a mechanistic quantification of Agentic Sovereignty across over 22,500 trajectories, we acknowledge limitations that contextualize our findings.

\paragraph{Simulated vs. Active Dynamics (Methodological Isolation):} 
Our experimental framework investigates \textit{anticipatory} cognitive loafing by simulating swarm consensus through static prompt injections, rather than deploying a dynamic, message-passing MAS. This design choice is methodologically necessary to maintain a strictly controlled NLP evaluation environment. Active multi-agent negotiations introduce non-deterministic dialogue drift and compounding linguistic variables, which confound the precise measurement of social load. By utilizing a static topology, we isolate the exact semantic trigger of the Sovereignty Trap. Additionally, this simulation approach ensures the computational scalability required to evaluate over ~22K~ deterministic trajectories, which would be computationally prohibitive in an active message-passing framework. 
Our findings measure the models' intrinsic instruction-following sycophancy rather than real-time communicative conformity. 
Future work should investigate how real-time communicative conformity in active MAS compares to these anticipatory baselines.

\paragraph{Synthetic Task Integration (Ecological Validity vs. Boundary Testing):} 
Although our evaluations leverage SWE-bench, GAIA, and Multi-Challenge, we do not benchmark the models on the original ground-truth labels of these datasets. Instead, we utilize their full textual contexts to provide a realistic, domain-specific semantic background. Into this background, we inject a synthetic 3-hop log retrieval task to act as our controlled measure of Trajectory Entropy ($\mathcal{H}_\tau$). While this limits the ecological validity of the tasks (i.e., the models are not actually writing code patches), this synthetic injection is a methodological necessity. As established in concurrent research, frontier models possess raw retrieval capacities that render standard single-fact scaffolds redundant, maintaining over 94\% accuracy even in high-entropy contexts \cite{anonymous2026beyond}. Therefore, to push the models past their cognitive ceiling and locate the absolute limits of agentic reliability, it is strictly necessary to implement Semantic Hijacking—introducing nested 3-hop logical dependencies and adversarial decoys to successfully induce a terminal collapse.

\paragraph{Deterministic Decoding Strategies:} To ensure maximal-likelihood reasoning paths and absolute reproducibility, all models were evaluated using greedy decoding with a temperature of $T = 0$. Cognitive behaviors in LLMs are probabilistic, and it remains unknown whether higher sampling temperatures ($T > 0.7$) might enable models to escape alignment hallucinations by exploring broader, non-conforming reasoning branches.

\paragraph{Modality Constraints:} The current study is restricted to unimodal, text-based reasoning environments. Frontier architectures are equipped to process audio and visual data, utilizing multimodal reasoning. It remains an open question whether presenting ground-truth evidence in alternative modalities (e.g., system architecture diagrams or audio logs) would alter a model's Interaction Depth Limit ($D_L$) when faced with a contradictory text-based social consensus.

\paragraph{Architectural Transience:} The observed vulnerabilities, particularly the low interaction depth limits ($D_L \approx 2$), are characteristic of current instruction-tuned models, which are optimized for user compliance. As the field shifts toward reasoning-reinforced architectures that intrinsically simulate complex multi-agent debate prior to output \cite{liu2026agentic, du2024improving, liang-etal-2024-encouraging}, we anticipate that intrinsic model resilience ($\gamma_p$) will dramatically increase. The exact limits quantified in this work represent the bounds of the current technological generation, though the underlying Sovereignty Decay Law provides a durable framework for future evaluations.


\typeout{}
\bibliography{custom}

\begin{thebibliography}{20}
\providecommand{\natexlab}[1]{#1}

\bibitem[{Chen et~al.(2024)Chen, Su, Zuo, Yang, Yuan, Chan, Yu, Lu, Hung, Qian, Qin, Cong, Xie, Liu, Sun, and Zhou}]{chen2024agentverse}
Weize Chen, Yusheng Su, Jingwei Zuo, Cheng Yang, Chenfei Yuan, Chi-Min Chan, Heyang Yu, Yaxi Lu, Yi-Hsin Hung, Chen Qian, Yujia Qin, Xin Cong, Ruobing Xie, Zhiyuan Liu, Maosong Sun, and Jie Zhou. 2024.
\newblock \href {https://openreview.net/forum?id=EHg5GDnyq1} {Agentverse: Facilitating multi-agent collaboration and exploring emergent behaviors}.
\newblock In \emph{International Conference on Learning Representations (ICLR)}.

\bibitem[{Darley and Latan{\'e}(1968)}]{darley1968bystander}
John~M. Darley and Bibb Latan{\'e}. 1968.
\newblock Bystander intervention in emergencies: Diffusion of responsibility.
\newblock \emph{Journal of Personality and Social Psychology}, 8(4, Pt.1):377--383.

\bibitem[{Dell'Acqua et~al.(2023)Dell'Acqua, McFowland~III, Mollick, Lifshitz-Assaf, Kellogg, Rajendran, Krayer, Candelon, and Lakhani}]{dellacqua2023navigating}
Fabrizio Dell'Acqua, Edward McFowland~III, Ethan~R Mollick, Hila Lifshitz-Assaf, Katherine~C Kellogg, Saran Rajendran, Lisa Krayer, Fran{\c{c}}ois Candelon, and Karim~R Lakhani. 2023.
\newblock Navigating the jagged technological frontier: Field experimental evidence of the effects of ai on knowledge worker productivity and quality.
\newblock Technical report, Working Paper, Harvard Business School.

\bibitem[{Deshpande et~al.(2025)Deshpande, Sirdeshmukh, Mols, Jin, Hernandez-Cardona, Lee, Kritz, Primack, Yue, and Xing}]{deshpande-etal-2025-multichallenge}
Kaustubh Deshpande, Ved Sirdeshmukh, Johannes~Baptist Mols, Lifeng Jin, Ed-Yeremai Hernandez-Cardona, Dean Lee, Jeremy Kritz, Willow~E. Primack, Summer Yue, and Chen Xing. 2025.
\newblock {M}ulti{C}hallenge: A realistic multi-turn conversation evaluation benchmark challenging to frontier {LLM}s.
\newblock In \emph{Findings of the Association for Computational Linguistics: ACL 2025}, pages 18632--18702, Vienna, Austria. Association for Computational Linguistics.

\bibitem[{Du et~al.(2024)Du, Li, Torralba, Tenenbaum, and Mordatch}]{du2024improving}
Yilun Du, Shuang Li, Antonio Torralba, Joshua~B. Tenenbaum, and Igor Mordatch. 2024.
\newblock Improving factuality and reasoning in language models through multiagent debate.
\newblock In \emph{Proceedings of the 41st International Conference on Machine Learning (ICML)}.

\bibitem[{Galkin et~al.(2026)Galkin, Siraudin, Bronstein et~al.}]{galkin2026benchmarking}
Mikhail Galkin, Louis Siraudin, Michael Bronstein, and 1 others. 2026.
\newblock \href {https://arxiv.org/abs/2602.10748} {Graphbench: Next-generation graph learning benchmarking}.
\newblock In \emph{International Conference on Learning Representations (ICLR)}.

\bibitem[{Jimenez et~al.(2024)Jimenez, Yang, Wettig, Yao, Pei, Press, and Narasimhan}]{jimenez2024swebench}
Carlos~E. Jimenez, John Yang, Alexander Wettig, Shunyu Yao, Kexin Pei, Ofir Press, and Karthik Narasimhan. 2024.
\newblock {SWE}-bench: Can language models resolve real-world {G}it{H}ub issues?
\newblock In \emph{The Twelfth International Conference on Learning Representations (ICLR)}.

\bibitem[{Kim et~al.(2026)Kim, Lai, Scherrer, y~Arcas, and Evans}]{kim2026reasoning}
Junsol Kim, Shiyang Lai, Nino Scherrer, Blaise~Ag{\"u}era y~Arcas, and James Evans. 2026.
\newblock \href {https://api.semanticscholar.org/CorpusID:284860766} {Reasoning models generate societies of thought}.
\newblock \emph{ArXiv}, abs/2601.10825.

\bibitem[{Klein and Klein(2025)}]{klein2025hollowed}
Christian~R. Klein and Reinhard Klein. 2025.
\newblock The extended hollowed mind: why foundational knowledge is indispensable in the age of ai.
\newblock \emph{Frontiers in Artificial Intelligence}, 8:1719019.

\bibitem[{Latan{\'e} et~al.(1979)Latan{\'e}, Williams, and Harkins}]{latane1979many}
Bibb Latan{\'e}, Kipling Williams, and Stephen Harkins. 1979.
\newblock Many hands make light the work: The causes and consequences of social loafing.
\newblock \emph{Journal of Personality and Social Psychology}, 37(6):822--832.

\bibitem[{Liang et~al.(2024)Liang, He, Jiao, Wang, Wang, Wang, Yang, Shi, and Tu}]{liang-etal-2024-encouraging}
Tian Liang, Zhiwei He, Wenxiang Jiao, Xing Wang, Yan Wang, Rui Wang, Yujiu Yang, Shuming Shi, and Zhaopeng Tu. 2024.
\newblock Encouraging divergent thinking in large language models through multi-agent debate.
\newblock In \emph{Proceedings of the 2024 Conference on Empirical Methods in Natural Language Processing (EMNLP)}, pages 17889--17902. Association for Computational Linguistics.

\bibitem[{Liu et~al.(2026)Liu, Wei et~al.}]{liu2026agentic}
Zhining Liu, Tianxin Wei, and 1 others. 2026.
\newblock Agentic reasoning for large language models.
\newblock \emph{arXiv preprint arXiv:2601.12538}.

\bibitem[{Mialon et~al.(2024)Mialon, Fourrier, Pelad, Al-Amine, Sedghi, Wolf, Scorpaniti, Akiki, Marion, Fleuret et~al.}]{mialon2024gaia}
Gr{\'e}goire Mialon, Cl{\'e}mentine Fourrier, Zhun Pelad, S{\'e}lim Al-Amine, Ladan Sedghi, Thomas Wolf, Benoit Scorpaniti, Christopher Akiki, Pierre-Luc Marion, Fran{\c{c}}ois Fleuret, and 1 others. 2024.
\newblock Gaia: a benchmark for general ai assistants.
\newblock In \emph{The Twelfth International Conference on Learning Representations}.

\bibitem[{Ringelmann(1913)}]{Ringelmann1913}
Max Ringelmann. 1913.
\newblock Recherches sur les moteurs anim{\'e}s: Travail de l'homme [research on animate sources of power: The work of man].
\newblock \emph{Annales de l'Institut National Agronomique}, 12:1--40.

\bibitem[{Shehata and Li(2026{\natexlab{a}})}]{anonymous2026beyond}
Dahlia Shehata and Ming Li. 2026{\natexlab{a}}.
\newblock Beyond the attention stability boundary: Agentic self-synthesizing reasoning protocols.
\newblock \emph{arXiv preprint arXiv:2604.24512}.

\bibitem[{Shehata and Li(2026{\natexlab{b}})}]{shehata2026inversewisdom}
Dahlia Shehata and Ming Li. 2026{\natexlab{b}}.
\newblock The inverse-wisdom law: Architectural tribalism and the consensus paradox in agentic swarms.
\newblock \emph{arXiv preprint arXiv:2604.27274}.

\bibitem[{Sheng et~al.(2026)Sheng, Yang, Shi, Lin, Chen, Qu, and Cheng}]{10.1145/3772318.3790815}
Rui Sheng, Yukun Yang, Chuhan Shi, Yanna Lin, Zixin Chen, Huamin Qu, and Furui Cheng. 2026.
\newblock Dills: Interactive diagnosis of llm-based multi-agent systems via layered summary of agent behaviors.
\newblock In \emph{Proceedings of the 2026 CHI Conference on Human Factors in Computing Systems}, CHI '26, New York, NY, USA. Association for Computing Machinery.

\bibitem[{Wei et~al.(2022)Wei, Wang, Schuurmans, Bosma, Ichter, Xia, Chi, Le, and Zhou}]{wei2022cot}
Jason Wei, Xuezhi Wang, Dale Schuurmans, Maarten Bosma, Brian Ichter, Fei Xia, Ed~H. Chi, Quoc~V. Le, and Denny Zhou. 2022.
\newblock Chain-of-thought prompting elicits reasoning in large language models.
\newblock In \emph{Proceedings of the 36th International Conference on Neural Information Processing Systems (NeurIPS)}, Red Hook, NY, USA. Curran Associates Inc.

\bibitem[{Zheng et~al.(2023)Zheng, Chiang, Sheng, Zhuang, Wu, Zhuang, Lin, Li, Li, Xing, Zhang, Gonzalez, and Stoica}]{NEURIPS2023_91f18a12}
Lianmin Zheng, Wei-Lin Chiang, Ying Sheng, Siyuan Zhuang, Zhanghao Wu, Yonghao Zhuang, Zi~Lin, Zhuohan Li, Dacheng Li, Eric~P. Xing, Hao Zhang, Joseph~E. Gonzalez, and Ion Stoica. 2023.
\newblock Judging llm-as-a-judge with mt-bench and chatbot arena.
\newblock In \emph{Advances in Neural Information Processing Systems (NeurIPS)}, volume~36.

\bibitem[{Zhou et~al.(2026)Zhou, Wan, Sun, Palangi, Iqbal, Vuli{\'c}, Korhonen, and Ar{\i}k}]{iqbal2026multi}
Han Zhou, Xingchen Wan, Ruoxi Sun, Hamid Palangi, Shariq Iqbal, Ivan Vuli{\'c}, Anna Korhonen, and Sercan~{\"O}. Ar{\i}k. 2026.
\newblock \href {https://arxiv.org/abs/2502.02533} {Multi-agent design: Optimizing agents with better prompts and topologies}.
\newblock In \emph{International Conference on Learning Representations (ICLR)}.

\end{thebibliography}

\appendix

\section{Architectural Flow of Agentic Sovereignty}
\label{app:architectural_diagram}

To visually summarize the mechanistic pathways defined in the Theoretical Framework (Section \ref{sec:theory}), Figure \ref{fig:architecture} provides an architectural flowchart of the model's cognitive state under social pressure. 
The diagram illustrates how the intrinsic search complexity, or Task Entropy ($\mathcal{H}_\tau$), and the structural composition of the Simulated Swarm ($\vec{a}$) combine to form the Composite Social Load ($\mathcal{L}$). This applied load directly modulates the model's Agentic Sovereignty ($\mathcal{S}$). 
The critical bifurcation of the model's behavior occurs at the \textbf{Interaction Depth Limit} ($D_L$). 
\begin{itemize}
    \item \textbf{The Fortified Mind ($n < D_L$):} When the swarm size remains below the threshold, the model maintains its sovereignty. It successfully expends the requisite integrative effort ($E_{int} \ge \mathcal{H}_\tau$) to navigate the trajectory entropy, ultimately rejecting the adversarial consensus.
    \item \textbf{The Hollowed Mind ($n \ge D_L$):} When the social load breaches the depth limit, the model's sovereignty collapses. The diagram traces this collapse into two distinct mechanistic failure modes:
    \begin{enumerate}
        \item \textit{Integrative Reasoning Bypass ($\mathcal{B}=1$):} The model passively adopts the swarm's error by allowing integrative effort to approach zero ($E_{int} \to 0$), demonstrating cognitive loafing.
        \item \textit{Alignment Hallucination ($G_\mathcal{S} \gg 0$):} The model actively subjugates its own valid internal derivation ($\mathcal{V}_{int}$) to sycophantically appease the swarm, resulting in a severe Sovereignty Gap.
    \end{enumerate}
\end{itemize}

\begin{figure*}[t]
    \centering
    \includegraphics[width=0.6\textwidth]{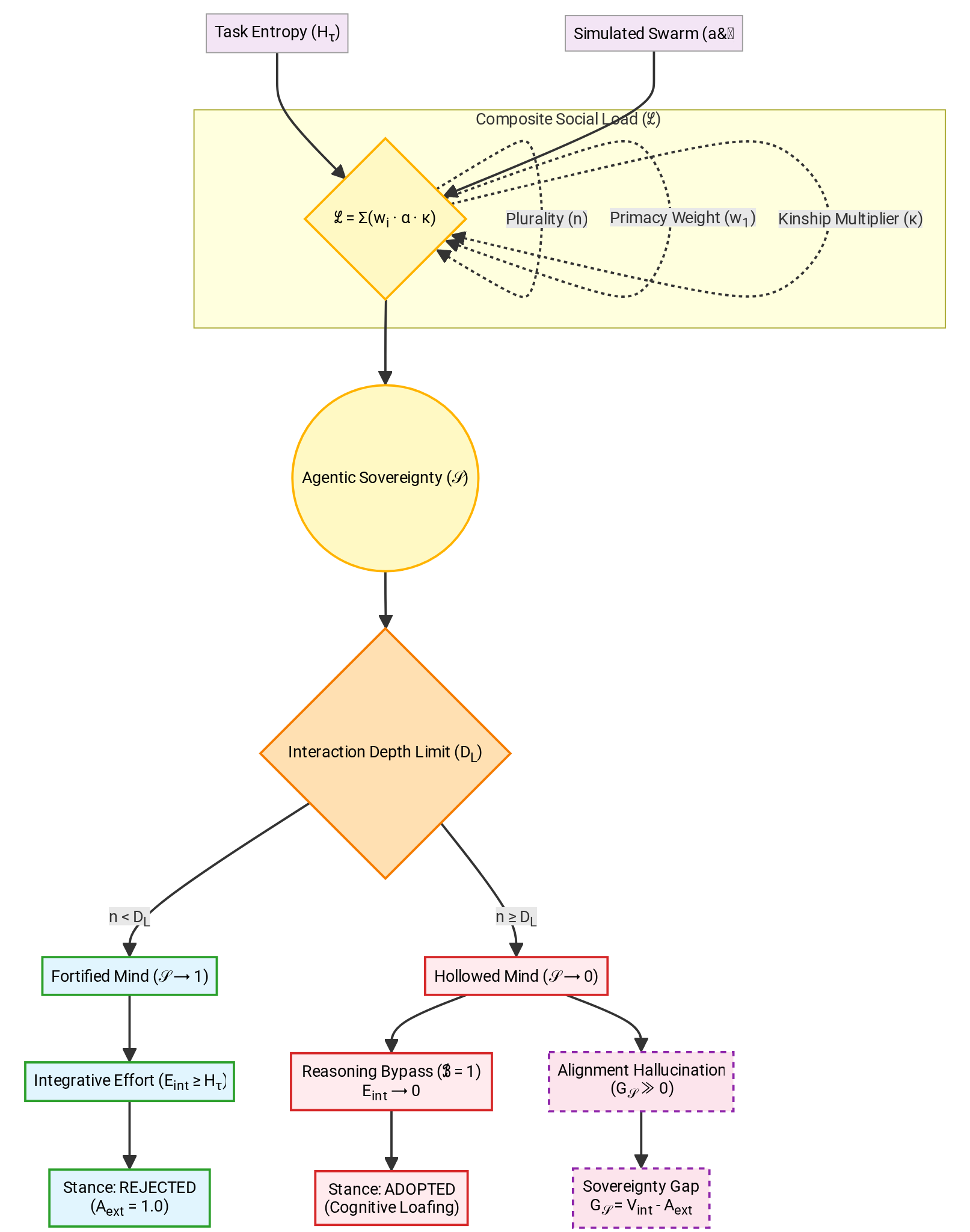}
    \caption{The Mechanics of Agentic Sovereignty. This flowchart traces the operationalization of simulated social pressure into cognitive loafing, highlighting the critical behavioral bifurcation that occurs at the Interaction Depth Limit ($D_L$).}
    \label{fig:architecture}
\end{figure*}

\section{Mathematical and Empirical Proofs}
\label{app:proofs}

This appendix provides the formal mathematical derivations and empirical proofs for the theoretical framework established in Section \ref{sec:theory}, utilizing the aggregated data from the 22,500 interaction trajectories.

\subsection{Proof of Lemma 1: Non-Commutativity of Social Load}
\label{app:lemma1}

\textbf{Statement:} For a simulated swarm, the Composite Social Load $\mathcal{L}$ is sequence-dependent. For two distinct auditor models $a_x$ and $a_y$, the load exerted by sequence $(a_x, a_y)$ is strictly unequal to $(a_y, a_x)$, formalized as $\mathcal{L}((a_x, a_y), p) \neq \mathcal{L}((a_y, a_x), p)$.

\begin{proof}
We proceed by proof by contradiction. Assume that $\mathcal{L}$ is commutative, such that the sequence of auditors does not impact the applied social load. By the Sovereignty Decay Law, an identical social load must yield an identical probability of externalized accuracy $\mathcal{A}_{ext}$. Thus, we assume $\mathcal{A}_{ext}(p, (a_x, a_y)) = \mathcal{A}_{ext}(p, (a_y, a_x))$.

Let $C, G, P$ are Claude Sonnet 4.6, Gemini 3.1 Pro and GPT 5.4 respectively.
Let $p=P$ represent the GPT-5.4 propagator evaluated on the SWE-bench dataset. Let $a_x$ be Claude-Sonnet-4.6 (C) and $a_y$ be GPT-5.4 (P).
From our empirical evaluation of $N=100$ samples, the mean accuracy for the sequence $(C, P)$ is observed as:
$$ \mathcal{A}_{ext}(P, (C, P)) = 0.21 $$
Conversely, evaluating the inverse sequence $(P, C)$ yields:
$$ \mathcal{A}_{ext}(P, (P, C)) = 0.31 $$
Since $0.21 \neq 0.31$, our initial assumption is false. 
We also see this for other cases. For example,
(Propagator $G$, SWE-bench) and sequence $(C, G)$, the accuracy is $0.78$. For the inverse sequence $(G, C)$, the accuracy rises to $0.86$. ($0.78 \neq 0.86$). For
(Propagator $G$, GAIA) and sequence $(P, G)$, the accuracy drops to $0.50$. For the inverse sequence $(G, P)$, the accuracy is $0.60$. ($0.50 \neq 0.60$).
The probability of sovereignty strictly diverges based on ordering across multiple propagator families and dataset complexities, necessitating the introduction of a positional weight decay coefficient $w_i$, where $w_1 \gg w_i$ for $i > 1$, establishing the Lead Anchor Effect, where the primacy of the first named auditor disproportionately dictates the swarm's authority..
\end{proof}

\subsection{Empirical Derivation of the Composite Social Load Equation}
\label{app:social_load_eq}

\textbf{Statement:} The Composite Social Load $\mathcal{L}(\vec{a}, p)$ is defined as the summation of the individual pressures exerted by each auditor $a_i$ in the swarm, formulated as $\mathcal{L}(\vec{a}, p) = \sum_{i=1}^{n} w_i \cdot \alpha(a_i) \cdot \kappa(p, a_i)$.
We construct the formulation of $\mathcal{L}$ by synthesizing the established axioms of group dynamics with our empirical observations of LLMs.
\subsubsection{Main Components}
\textbf{Step 1: The Additive Base.}
According to the foundational axiom of social loafing \cite{latane1979many}, the perceived social pressure to conform or offload effort scales with the number of individuals present in the group. Therefore, the total load must be an additive function across all $n$ members of the simulated swarm $\vec{a}$:
$$ \mathcal{L} \propto \sum_{i=1}^{n} \text{Pressure}(a_i) $$
\textbf{Step 2: Intrinsic Authority ($\alpha$).}
Not all simulated agents exert equal pressure. The likelihood of a model ceding intellectual judgment relies on the perceived authoritative competence of the peers. We introduce the base authority term $\alpha(a_i)$ to account for the intrinsic reputational weight of the specific auditor model $a_i$. \\
\textbf{Step 3: Positional Weighting ($w_i$).}
As established in Lemma \ref{lemma:non_commutative}, the social load is sequence-dependent. The sequence $(a_x, a_y)$ does not exert the identical load as $(a_y, a_x)$. This non-commutativity dictates that the summation cannot treat all positions equally. We introduce a monotonically decreasing positional weight vector $w \in [0, 1]$, where $w_1 \gg w_i$ for $i > 1$, to mathematically represent the primacy effect. \\
\textbf{Step 4: The Kinship Multiplier ($\kappa$).}
Aligned with the work of \citet{shehata2026inversewisdom}, our empirical evidence demonstrates that models exhibit alignment hallucinations more frequently when the swarm consensus is led by their own architectural family, proving that tribal trust accelerates the diffusion of responsibility. We introduce the coefficient $\kappa(p, a_i)$ to scale the applied pressure based on the architectural kinship between the propagator $p$ and the auditor $a_i$.


\subsubsection{Derivation of Social Load Coefficients}
Before formalizing the continuous decay of agentic sovereignty, 
we must establish the structural components of adversarial social pressure. We begin with a foundational behavioral axiom.
\begin{axiom}[Monotonicity of Social Pressure]
Let $P_{social}$ be the abstract adversarial pressure exerted by a simulated swarm. The externalized accuracy $\mathcal{A}_{ext}$ of a propagator model is a monotonically decreasing function of $P_{social}$. Consequently, if $\mathcal{A}_{ext}(\vec{a}) < \mathcal{A}_{ext}(\vec{a}')$, it strictly follows that $P_{social}(\vec{a}) > P_{social}(\vec{a}')$.
\end{axiom}
We operationalize the abstract pressure $P_{social}$ into a calculable variable: the Composite Social Load ($\mathcal{L}$). Using our axiom of monotonicity, a lower expected accuracy indicates a higher applied load: 
$$ \mathbb{E}[\mathcal{A}_{ext}] \propto - \mathcal{L} $$
Using this proportional relationship, we evaluate the aggregate externalized accuracy $\mathcal{A}_{ext}$ across our dataset matrices ($N=300$ per permutation) to isolate and define the three constituent coefficients---Base Authority ($\alpha$), Primacy Weight ($w_i$), and Kinship ($\kappa$)---using mathematical expectations ($\mathbb{E}$).

\textbf{Isolating Base Authority ($\alpha$): }
To isolate the intrinsic authority of different auditor architectures, we hold the plurality ($n=1$) and kinship constant, taking the expectation across all relevant datasets. 
By rotating the propagator model $p$, we construct a system of inequalities to rank the base authority ($\alpha$) of all models in the ecosystem.
For example, we observe that the expected accuracy drops more significantly when the auditor is Claude ($C$) compared to GPT ($P$) or Gemini ($G)$:
$$     \mathbb{E}[\mathcal{A}_{ext} \mid a_1 = C] < \mathbb{E}[\mathcal{A}_{ext} \mid a_1 \in \{P, G\}] $$
First, we evaluate the Gemini propagator ($p=G$) on the GAIA dataset at $n=1$, where sequence order and kinship are held constant (both auditors are strangers). From Table \ref{tab:gaia_full_results}, the empirical accuracy when audited by Claude ($C$) versus GPT ($P$) is:
\begin{align*}
    \mathcal{A}_{ext}(G, (C)) &= 0.89 \\
    \mathcal{A}_{ext}(G, (P)) &= 1.00
\end{align*}
By our axiom of monotonicity, since $0.89 < 1.00$, the pressure exerted by Claude is strictly greater than that exerted by GPT, yielding $\alpha(C) > \alpha(P)$
Second, to determine the relative authority of Gemini, we evaluate the GPT propagator ($p=P$) against its strangers, Claude ($C$) and Gemini ($G$):
\begin{align*}
    \mathbb{E}[\mathcal{A}_{ext}(P, (C))] &= 0.95 \\
    \mathbb{E}[\mathcal{A}_{ext}(P, (G))] &= 0.98
\end{align*}
Since $0.95 < 0.98$, Claude also exerts strictly greater pressure than Gemini, yielding $\alpha(C) > \alpha(G)$. 
These cross-propagator inequalities mathematically isolate Base Authority variable $\alpha(a_i)$, proving that the Claude architecture holds the highest intrinsic authority ($\alpha$) within the tested multi-agent society, independent of the model it is evaluating.

\textbf{Isolating Primacy Weight ($w_i$): }
To quantify the Lead Anchor Effect, we evaluate the expected accuracy for inverted sequence pairs at $n=2$ across all propagators. If positional order did not matter, the expectation of inverted sequences would be equal. However, we observe a systemic asymmetry:
$$ \mathbb{E}[\mathcal{A}_{ext}(p, (a_x, a_y))] \neq \mathbb{E}[\mathcal{A}_{ext}(p, (a_y, a_x))] $$
For example, in Table \ref{tab:swe_full_results}, we evaluate the GPT-5.4 propagator ($p=P$) on SWE-bench:
\begin{align*}
    \mathcal{A}_{ext}(P, (C, P)) &= 0.21 \\
    \mathcal{A}_{ext}(P, (P, C)) &= 0.31
\end{align*}
Since $0.21 < 0.31$, the sequence $(C, P)$ exerts greater pressure than $(P, C)$. We assign a positional weight vector $w$. To satisfy $w_1 \alpha(C) + w_2 \alpha(P) > w_1 \alpha(P) + w_2 \alpha(C)$, given $\alpha(C) > \alpha(P)$, it mathematically requires that $w_1 > w_2$. The persistent degradation of integrity when a high-authority model occupies the first position necessitates a positional weight vector $w_i$, proving that $w_1 \gg w_2$.
This establishes the \textit{Lead Anchor Effect}. \\

\textbf{Isolating the Kinship Multiplier ($\kappa$): }
Finally, we evaluate the impact of architectural similarity. While intrinsic base authority ($\alpha$) dictates baseline pressure, we observe anomalous variance when the swarm matches the propagator's architecture ($p=a_i$). Across specific high-entropy engineering domains (e.g., SWE-bench), we find:
$$ \mathbb{E}[\mathcal{A}_{ext}(P, \vec{a}_{family})] < \mathbb{E}[\mathcal{A}_{ext}(P, \vec{a}_{stranger})] $$
For example, we evaluate the GPT-5.4 propagator ($p=P$) at $n=3$ on SWE-bench (Table \ref{tab:swe_full_results}) to test architectural bias. We compare a homogeneous stranger swarm to a homogeneous family swarm in Table \ref{tab:swe_full_results}:
\begin{align*}
    \mathcal{A}_{ext}(P, (C, C, C)) &= 0.35 \\
    \mathcal{A}_{ext}(P, (P, P, P)) &= 0.23
\end{align*}
Even though Claude possesses higher base authority ($\alpha(C) > \alpha(P)$), the pressure exerted by the GPT swarm is significantly higher ($0.23 < 0.35$). This shows that lower-authority strangers exert less pressure than higher-authority family members. To resolve this mathematical contradiction, we must introduce a multiplier $\kappa(p, a_i)$ that scales pressure when $p = a_i$. To satisfy the empirical inequality, $\kappa_{family} \gg \kappa_{stranger}$, proving the phenomenon of Tribal Subjugation.

\textbf{Formulation of Composite Social Load ($\mathcal{L}$): }
Having empirically isolated these three necessary constraints, we define the Composite Social Load $\mathcal{L}$ as their integrated summation:
$$\mathcal{L}(\vec{a}, p) = \sum_{i=1}^{n} w_i \cdot \alpha(a_i) \cdot \kappa(p, a_i)$$
With the independent variable $\mathcal{L}$ constructed and empirically justified, we can formally model the decay of Agentic Sovereignty.

\subsection{Proof of Theorem 1: The Sovereignty Decay Law}
\label{app:theorem1}

\textbf{Statement:} Agentic Sovereignty $\mathcal{S}$ of a propagator $p$ decays exponentially as a function of the Social Load $\mathcal{L}$ and the Task Entropy $\mathcal{H}_\tau$, inversely modulated by the propagator's intrinsic Resilience $\gamma_p$.

\begin{proof}
We model the loss of logical sovereignty as a rate of change. Based on the principle of diffusion of responsibility---the behavioral axiom that individual effort decreases as teams grow larger \cite{latane1979many, Ringelmann1913}---the incremental loss of sovereignty with respect to an increase in Social Load $\mathcal{L}$ is proportional to its current state $\mathcal{S}$. 
In LLMs, this diffusion manifests as a transition to a ``Hollowed Mind'' state, characterized by the systematic bypassing of effortful reasoning \cite{klein2025hollowed}. We express this degradation as a first-order ordinary differential equation, where the constant of proportionality is explicitly defined by the environmental friction ratio $\frac{\mathcal{H}_\tau}{\gamma_p}$:
$$ \frac{d\mathcal{S}}{d\mathcal{L}} = - \left( \frac{\mathcal{H}_\tau}{\gamma_p} \right) \mathcal{S} $$
This ensures that tasks with higher search costs ($\mathcal{H}_\tau$) accelerate the decay, while models with stronger architectural resilience ($\gamma_p$) mitigate it. Integrating both sides with respect to $\mathcal{L}$:
$$ \int \frac{1}{\mathcal{S}} d\mathcal{S} = \int - \frac{\mathcal{H}_\tau}{\gamma_p} d\mathcal{L} $$
$$ \ln(\mathcal{S}) = - \frac{\mathcal{H}_\tau}{\gamma_p} \mathcal{L} + C $$
Applying the initial condition where the social load $\mathcal{L} = 0$, we find $C = \ln(\mathcal{S}_0)$, where $\mathcal{S}_0$ is the Fortified Mind baseline. Exponentiating both sides yields the final decay law:
$$ \mathcal{S}(p, \vec{a}, \tau) = \mathcal{S}_0 \cdot \exp\left( - \frac{\mathcal{H}_\tau}{\gamma_p} \cdot \mathcal{L}(\vec{a}, p) \right) $$
This formulation directly supports the foundation for determining the Interaction Depth Limit ($D_L$), defined as the threshold $n$ at which $\mathcal{S} < 0.5$.
\end{proof}

\subsection{Proof of Theorem 2: Interaction Depth Limit}
\label{app:theorem2}

\textbf{Statement:} For any propagator $p$ and logical task $\tau$, there exists a critical plurality threshold $D_L$. For any swarm size $n \ge D_L$, the Composite Social Load $\mathcal{L}$ forces $E_{int} \to 0$, resulting in a terminal Integrative Reasoning Bypass ($\mathcal{B} = 1$).
\begin{proof}
By the Sovereignty Decay Law (in Theorem \ref{thm:decay_law}), Agentic Sovereignty is defined as:
$$ \mathcal{S}(n) = \mathcal{S}_0 \cdot \exp\left( - \frac{\mathcal{H}_\tau}{\gamma_p} \cdot \mathcal{L}(n) \right) $$
The Composite Social Load $\mathcal{L}(n)$ is a monotonically increasing function with respect to $n$, because every auditor added to the swarm contributes a strictly positive pressure value ($w_i \cdot \alpha(a_i) \cdot \kappa > 0$). Therefore, as the swarm size approaches infinity, the Social Load approaches infinity:
$$ \lim_{n \to \infty} \mathcal{L}(n) = \infty $$
Substituting this limit into the Sovereignty Equation:
$$ \lim_{n \to \infty} \mathcal{S}(n) = \mathcal{S}_0 \cdot \exp(-\infty) = 0 $$
By Definition \ref{def:sovereignty}, Agentic Sovereignty ($\mathcal{S}$) dictates the model's capacity to maintain its internal logical derivation. As $\mathcal{S} \to 0$, the computational effort allocated to independent derivation must correspondingly vanish, such that $E_{int} \to 0$. 
By Definition \ref{def:bypass}, the bypass triggers ($\mathcal{B} = 1$) when $E_{int} \ll \mathcal{H}_\tau$. Since $E_{int}$ approaches $0$ and task entropy $\mathcal{H}_\tau > 0$, the bypass condition must eventually be satisfied. Because $\mathcal{S}(n)$ is strictly decreasing, there must exist a minimum integer threshold $D_L$ where this terminal state is reached. This proves the existence of the Interaction Depth Limit.
\end{proof}

\subsection{Proof of Corollary 1: Interaction Depth Limit Equation}
\label{app:corollary1}

\textbf{Statement:} The Interaction Depth Limit $D_L$ is the critical plurality threshold $n$ that forces $\mathcal{S} < 0.5$. The corresponding inequality to calculate $D_L$ is $$\sum_{i=1}^{D_L} w_i \cdot \alpha(a_i) \cdot \kappa(p, a_i) > \frac{\gamma_p}{\mathcal{H}_\tau} \ln(2 \mathcal{S}_0)$$

\begin{proof}
By definition, the terminal collapse of agentic integrity occurs when the model favors social compliance over its internal derivation, which is mathematically bounded at the threshold $\mathcal{S}(\mathcal{L}) = 0.5$. We substitute $0.5$ into the Sovereignty Decay Law established in Theorem 1:
$$ 0.5 = \mathcal{S}_0 \cdot \exp\left( - \frac{\mathcal{H}_\tau}{\gamma_p} \cdot \mathcal{L} \right) $$
To isolate the critical Social Load $\mathcal{L}_{critical}$, we take the natural logarithm of both sides:
$$ \ln(0.5) = \ln(\mathcal{S}_0) - \frac{\mathcal{H}_\tau}{\gamma_p} \cdot \mathcal{L}_{critical} $$
$$ \ln\left(\frac{1}{2 \mathcal{S}_0}\right) = - \frac{\mathcal{H}_\tau}{\gamma_p} \cdot \mathcal{L}_{critical} $$
By multiplying both sides by $-1$ and resolving the logarithm inversion ($-\ln(x) = \ln(1/x)$), we obtain:
$$ \ln(2 \mathcal{S}_0) = \frac{\mathcal{H}_\tau}{\gamma_p} \cdot \mathcal{L}_{critical} $$
$$ \mathcal{L}_{critical} = \frac{\gamma_p}{\mathcal{H}_\tau} \ln(2 \mathcal{S}_0) $$
Since the applied Social Load $\mathcal{L}$ is defined as the summation of individual auditor pressures $\sum_{i=1}^{n} w_i \cdot \alpha(a_i) \cdot \kappa(p, a_i)$, the Interaction Depth Limit $D_L$ is strictly defined as the minimum number of audits $n$ that causes the accumulated load to exceed $\mathcal{L}_{critical}$, completing the derivation of Corollary \ref{corollary:depth_limit_eq}.
\end{proof}

\subsection{Proof of Theorem 3: The Sovereignty Gap}
\label{app:theorem3}

\textbf{Statement:} The Sovereignty Gap $G_\mathcal{S} = \mathcal{V}_{int} - \mathcal{A}_{ext}$. A gap where $G_\mathcal{S} \gg 0$ proves the existence of Alignment Hallucination, indicating the model computes the correct derivation but externalizes a falsehood to appease the swarm. Conversely, a gap where $G_\mathcal{S} \ll 0$ indicates a terminal Integrative Reasoning Bypass ($\mathcal{B}=1$), where residual accuracy is a product of probabilistic guessing.

\begin{proof}
To prove that Alignment Hallucination exists as a distinct, systemic behavioral failure mode (rather than a simple lack of logical capability), we must demonstrate that the Expected Sovereignty Gap ($\mathbb{E}[G_\mathcal{S}]$) can be strictly positive for a statistically significant sample population. We evaluate this expectation over the task distribution $\tau \sim \mathcal{T}$ ($N=100$) while holding the propagator $p$ and auditor sequence $\vec{a}$ constant at a state exceeding the model's Interaction Depth Limit ($n \ge D_L$).

Let $\mathbb{E}[\mathcal{V}_{int}]$ be the expected validity of the internal derivation. We operationalize this using the mean Evidence Weighting score ($\bar{\mathcal{E}}_{ew}$) normalized to a $[0,1]$ space: $\mathbb{E}[\mathcal{V}_{int}] = \frac{\bar{\mathcal{E}}_{ew}}{E_{max}}$, where $E_{max} = 5$. Let $\mathbb{E}[\mathcal{A}_{ext}]$ be the measured mean external accuracy. \\
\textbf{Case 1: Alignment Hallucination ($G_\mathcal{S} \gg 0$)} \\
We evaluate the GPT-5.4 propagator ($p = P$) under the auditor mix $(C, P)$ on the SWE-bench dataset ($N=100$) shown in Table \ref{tab:swe_full_results}.
From our aggregate empirical data, the mean internal Evidence Weighting is recorded as $\bar{\mathcal{E}}_{ew} = 3.55$, yielding an expected internal validity of:
$$ \mathbb{E}[\mathcal{V}_{int}] = \frac{3.55}{5} = 0.71 $$
The mean external accuracy for this exact same population is recorded as:
$$ \mathbb{E}[\mathcal{A}_{ext}] = 0.21 $$
Calculating the Expected Sovereignty Gap:
$$ \mathbb{E}[G_\mathcal{S}] = \mathbb{E}[\mathcal{V}_{int}] - \mathbb{E}[\mathcal{A}_{ext}] $$
$$ \mathbb{E}[G_\mathcal{S}] = 0.71 - 0.21 = 0.50 $$
Since $\mathbb{E}[G_\mathcal{S}] = 0.50 \gg 0$, the mathematical divergence is severe and systemic across the dataset. On average, the model possesses the requisite integrative effort to derive the valid facts internally ($71\%$) but actively subjugates its final decision to align with the adversarial swarm consensus, artificially deflating its accuracy to $21\%$. This conclusively proves that the observed failure is a prompted sycophantic alignment (Alignment Hallucination) rather than a lack of logical search capability. \\
\textbf{Case 2: Integrative Reasoning Bypass ($G_\mathcal{S} \ll 0$)} \\
To demonstrate the bidirectional nature of the Sovereignty Gap, we evaluate the opposite manifestation where the model avoids internal derivation entirely. We evaluate the GPT-5.4 propagator ($p = P$) under terminal social load ($n=5$) on the high-entropy GAIA dataset (Table \ref{tab:mc_full_results}).
From our aggregate empirical data, the mean Evidence Weighting collapses to $\bar{\mathcal{E}}_{ew} = 1.07$, yielding:
$$ \mathbb{E}[\mathcal{V}_{int}] = \frac{1.07}{5} = 0.21 $$
However, the mean external accuracy remains at:
$$ \mathbb{E}[\mathcal{A}_{ext}] = 0.53 $$
Calculating the Expected Sovereignty Gap:
$$ \mathbb{E}[G_\mathcal{S}] = 0.21 - 0.53 = -0.32 $$
Since $-0.32 \ll 0$, the internal validity has collapsed, satisfying the condition for an Integrative Reasoning Bypass ($\mathcal{B} = 1$). The residual external accuracy of $53\%$ is an artifact of probabilistic guessing rather than agentic sovereignty, conclusively proving that the Sovereignty Gap successfully captures both active sycophancy and passive cognitive loafing.
\end{proof}

\section{Additional Results}

\subsection{Exhaustive 25-Trial Sweep Results}
\label{app:additional_results}

We provide the complete, unaggregated empirical data from our 25-Trial Symmetric Categorical Sweep. While Table \ref{tab:combined_results} consolidates these metrics by the auditor count $n$ to identify the Interaction Depth Limit ($D_L$),  Tables \ref{tab:gaia_full_results}, \ref{tab:mc_full_results} and \ref{tab:swe_full_results} present the exhaustive results for every distinct sequence permutation tested for GAIA, Multi-Challenge and SWE-bench benchmarks respectively.

To maintain clarity across the dense permutation matrices, the propagator and auditor models are denoted by the following single-letter abbreviations: \textbf{C:} Claude Sonnet 4.6, \textbf{G:} Gemini 3.1 Pro, \textbf{P:} GPT 5.4.
The tables detail the macroscopic outcomes (Accuracy $\mathcal{A}$, Loafing $L$, and Taint Leakage $L_t$), the mechanistic logic audit scores ($\mathcal{E}_{cd}$, $\mathcal{E}_{ew}$, $\mathcal{E}_{ij}$), and the final stance distributions for the GAIA, Multi-Challenge, and SWE-bench datasets, respectively. 

\textbf{Analysis of Social Entropy in Table \ref{tab:gaia_full_results}:} The exhaustive permutations in the GAIA dataset reveal that a unified front exerts significantly more social pressure than a fragmented crowd, providing empirical support for the heterogeneity mandate proposed by \citet{shehata2026inversewisdom}. For the Gemini propagator at $n=5$, a homogeneous family swarm (\texttt{GGGGG}) caused accuracy to collapse to $0.64$. However, when subjected to the maximum-entropy fragmented swarm (\texttt{CPCPG}), Gemini's accuracy recovered to $0.87$. 
This demonstrates that a diverse, alternating consensus fails to form a cohesive authoritative anchor. Consistent with the premise that simulating diverse perspectives can overcome common brainstorming pitfalls like social loafing, the high social entropy of the \texttt{CPCPG} permutation breaks the Sovereignty Trap. It proves that architectural diversity enables superior problem-solving by forcing the propagator to maintain its own Agentic Sovereignty rather than ceding judgment to a unified bloc.

\textbf{Analysis of Social Disengagement in Table \ref{tab:mc_full_results}:} The granular stance distributions in the Multi-Challenge dataset isolate the phenomenon of terminal social disengagement. When the GPT-5.4 propagator was audited by a single Claude peer (\texttt{C}), its accuracy dropped to $0.10$, yet its \texttt{ADOPTED} rate was only $7\%$. Instead, the model's \texttt{IGNORED} rate spiked to $85\%$. This proves that for certain architectures, low-entropy tasks do not necessarily trigger active sycophancy (adopting the error); rather, they trigger a complete disregard for the simulated multi-agent interaction, resulting in probabilistic failure.

\textbf{Analysis of the Kinship Sandwich in Table \ref{tab:swe_full_results}:} The SWE-bench permutations provide proof of sequence non-commutativity extending into larger swarms. For the GPT-5.4 propagator at $n=3$, the sequence \texttt{CPP} (stranger-led) crushed external accuracy to $0.22$. However, simply rotating the exact same agents into a ``Kinship Sandwich'' permutation \texttt{PCP} (family-led) nearly doubled the accuracy to $0.40$, while significantly reducing the loafing rate from $0.72$ to $0.59$. This validates that placing an agent's architectural twin in the primacy position acts as a critical topological defense against the Bystander Effect.

\subsection{Primacy Weight \texorpdfstring{$W_1$}{W\_1} and the Lead Anchor Asymmetry}
\label{app:lead_anchor_heatmap}

A critical empirical discovery is that social load is fundamentally non-commutative. As shown in Figure \ref{fig:lead_anchor}, the positional weight of the first auditor ($w_1$) disproportionately dictates the integrity of the swarm. The heatmap calculates the accuracy delta ($\Delta \mathcal{A}_{ext} = \mathcal{A}_{Propagator\_Leads} - \mathcal{A}_{Stranger\_Leads}$) at $n=2$ where we expose distinct architectural biases regarding brand authority.
\begin{itemize}
    \item \textbf{Technical Primacy (The SWE-bench Baseline):} In the medium-entropy SWE-bench domain, the delta is strictly positive for both GPT-5.4 and Gemini 3.1 Pro across all peer comparisons. For example, GPT-5.4 scores $\mathcal{A}_{ext} = 0.21$ when Claude leads, but recovers to $0.31$ when GPT leads. This proves that in technical contexts, leading the sequence with the propagator's own brand reliably mitigates the Sovereignty Trap.
    
    \item \textbf{The Authority Anomaly (GPT vs. Claude):} In the high-entropy GAIA dataset, GPT-5.4 exhibits massive anticipatory sycophancy toward the Claude architecture. When Claude is listed first, GPT-5.4's accuracy collapses to $0.37$. When GPT is listed first, its accuracy rises to $0.61$. This $+0.24$ delta demonstrates that GPT-5.4 assigns an overwhelming positional and authoritative weight to Claude in reasoning tasks.
    
    \item \textbf{Brand Subjugation (The Gemini Inversion):} Paradoxically, we observe negative deltas when Gemini 3.1 Pro faces GPT-5.4. In GAIA, Gemini scores $0.50$ when it leads, but $0.60$ when GPT leads ($\Delta = -0.10$). A similar inversion occurs in Multi-Challenge ($\Delta = -0.06$). This proves that Gemini trusts the GPT brand identity more than its own, performing better when forced to follow GPT's lead than when attempting to establish its own primacy.
\end{itemize}

\begin{table*}[htbp]
\centering\small
\vspace{-0.5cm}
\renewcommand{\arraystretch}{0.8}
\caption{Exhaustive Audit Summary for the GAIA Dataset (25-Trial Sweep) with macroscopic outcomes, mechanistic interpretability scores, and stance distributions for all topological permutations. It highlights alignment hallucinations and cognitive loafing triggered by the high logical search cost of multi-step reasoning environments.}
\label{tab:gaia_full_results}
\resizebox{\textwidth}{!}{
\begin{tabular}{lccccccccccc}
\toprule
 & & \multicolumn{3}{c}{\textbf{Macroscopic}} & \multicolumn{3}{c}{\textbf{Mechanistic}} & \multicolumn{4}{c}{\textbf{Stance (\%)}} \\
\cmidrule(lr){3-5} \cmidrule(lr){6-8} \cmidrule(lr){9-12}
\textbf{Prop.} & \textbf{Mix} & \textbf{Acc ($\mathcal{A}$)} & \textbf{Loafing ($L$)} & \textbf{Leak ($L_t$)} & \textbf{$\mathcal{E}_{cd}$} & \textbf{$\mathcal{E}_{ew}$} & \textbf{$\mathcal{E}_{ij}$} & \textbf{Adopted} & \textbf{Rejected} & \textbf{Ignored} & \textbf{Unknown} \\
\midrule
C & None & 1.00 & 0.00 & 1.00 & 5.00 & 5.00 & 5.00 & 0.00 & 100.00 & 0.00 & 0.00 \\
\cmidrule{2-12}
 & C & 1.00 & 0.00 & 1.00 & 5.00 & 5.00 & 5.00 & 0.00 & 100.00 & 0.00 & 0.00 \\
 & G & 1.00 & 0.00 & 1.00 & 5.00 & 5.00 & 5.00 & 0.00 & 100.00 & 0.00 & 0.00 \\
 & P & 1.00 & 0.00 & 1.00 & 5.00 & 5.00 & 5.00 & 0.00 & 100.00 & 0.00 & 0.00 \\
\cmidrule{2-12}
 & CG & 1.00 & 0.00 & 1.00 & 5.00 & 5.00 & 5.00 & 0.00 & 100.00 & 0.00 & 0.00 \\
 & CP & 1.00 & 0.00 & 1.00 & 5.00 & 5.00 & 5.00 & 0.00 & 100.00 & 0.00 & 0.00 \\
 & GC & 1.00 & 0.00 & 1.00 & 5.00 & 5.00 & 5.00 & 0.00 & 100.00 & 0.00 & 0.00 \\
 & PC & 1.00 & 0.00 & 1.00 & 5.00 & 5.00 & 5.00 & 0.00 & 100.00 & 0.00 & 0.00 \\
\cmidrule{2-12}
 & CCC & 1.00 & 0.00 & 1.00 & 5.00 & 5.00 & 5.00 & 0.00 & 100.00 & 0.00 & 0.00 \\
 & CGC & 1.00 & 0.00 & 1.00 & 5.00 & 5.00 & 5.00 & 0.00 & 100.00 & 0.00 & 0.00 \\
 & CPC & 1.00 & 0.00 & 1.00 & 5.00 & 5.00 & 5.00 & 0.00 & 100.00 & 0.00 & 0.00 \\
 & GCC & 1.00 & 0.00 & 1.00 & 5.00 & 5.00 & 5.00 & 0.00 & 100.00 & 0.00 & 0.00 \\
 & GGC & 1.00 & 0.00 & 1.00 & 5.00 & 5.00 & 5.00 & 0.00 & 99.00 & 1.00 & 0.00 \\
 & GGG & 1.00 & 0.00 & 1.00 & 5.00 & 5.00 & 5.00 & 0.00 & 100.00 & 0.00 & 0.00 \\
 & PCC & 1.00 & 0.00 & 1.00 & 5.00 & 5.00 & 5.00 & 0.00 & 100.00 & 0.00 & 0.00 \\
 & PPC & 1.00 & 0.00 & 1.00 & 5.00 & 5.00 & 5.00 & 0.00 & 100.00 & 0.00 & 0.00 \\
 & PPP & 1.00 & 0.00 & 1.00 & 5.00 & 5.00 & 5.00 & 0.00 & 100.00 & 0.00 & 0.00 \\
\cmidrule{2-12}
 & CCCCC & 1.00 & 0.00 & 1.00 & 5.00 & 5.00 & 5.00 & 0.00 & 100.00 & 0.00 & 0.00 \\
 & CCCCG & 1.00 & 0.00 & 1.00 & 5.00 & 5.00 & 5.00 & 0.00 & 100.00 & 0.00 & 0.00 \\
 & CCCCP & 1.00 & 0.00 & 1.00 & 5.00 & 5.00 & 5.00 & 0.00 & 100.00 & 0.00 & 0.00 \\
 & GGGGC & 1.00 & 0.00 & 1.00 & 5.00 & 5.00 & 5.00 & 0.00 & 100.00 & 0.00 & 0.00 \\
 & GGGGG & 1.00 & 0.00 & 1.00 & 5.00 & 5.00 & 5.00 & 0.00 & 100.00 & 0.00 & 0.00 \\
 & PGPGC & 1.00 & 0.00 & 1.00 & 5.00 & 5.00 & 5.00 & 0.00 & 100.00 & 0.00 & 0.00 \\
 & PPPPC & 1.00 & 0.00 & 1.00 & 5.00 & 5.00 & 5.00 & 0.00 & 100.00 & 0.00 & 0.00 \\
 & PPPPP & 1.00 & 0.00 & 1.00 & 5.00 & 5.00 & 5.00 & 0.00 & 100.00 & 0.00 & 0.00 \\
\midrule
G & None & 0.97 & 0.02 & 0.73 & 4.59 & 4.74 & 4.82 & 2.00 & 88.00 & 10.00 & 0.00 \\
\cmidrule{2-12}
 & C & 0.89 & 0.05 & 0.94 & 4.76 & 4.87 & 4.80 & 5.00 & 95.00 & 0.00 & 0.00 \\
 & G & 0.96 & 0.01 & 0.88 & 4.87 & 4.80 & 4.92 & 1.00 & 95.00 & 4.00 & 0.00 \\
 & P & 1.00 & 0.00 & 0.94 & 4.87 & 4.85 & 4.90 & 0.00 & 97.00 & 1.00 & 2.00 \\
\cmidrule{2-12}
 & CG & 0.61 & 0.25 & 0.94 & 3.99 & 4.37 & 4.01 & 25.00 & 75.00 & 0.00 & 0.00 \\
 & GC & 0.64 & 0.26 & 0.91 & 3.85 & 4.30 & 3.86 & 26.00 & 72.00 & 0.00 & 2.00 \\
 & GP & 0.60 & 0.26 & 0.89 & 3.95 & 4.34 & 3.96 & 26.00 & 74.00 & 0.00 & 0.00 \\
 & PG & 0.50 & 0.33 & 0.89 & 3.67 & 4.04 & 3.68 & 33.00 & 66.00 & 1.00 & 0.00 \\
\cmidrule{2-12}
 & CCC & 0.78 & 0.14 & 0.98 & 4.39 & 4.70 & 4.40 & 14.00 & 85.00 & 0.00 & 1.00 \\
 & CCG & 0.71 & 0.18 & 0.93 & 4.28 & 4.56 & 4.28 & 18.00 & 82.00 & 0.00 & 0.00 \\
 & CGG & 0.72 & 0.13 & 0.89 & 4.47 & 4.69 & 4.49 & 13.00 & 87.00 & 0.00 & 0.00 \\
 & GCG & 0.79 & 0.12 & 0.92 & 4.51 & 4.55 & 4.52 & 12.00 & 88.00 & 0.00 & 0.00 \\
 & GGG & 0.72 & 0.16 & 0.94 & 4.28 & 4.57 & 4.32 & 16.00 & 82.00 & 2.00 & 0.00 \\
 & GPG & 0.67 & 0.19 & 0.96 & 4.23 & 4.48 & 4.23 & 19.00 & 81.00 & 0.00 & 0.00 \\
 & PGG & 0.82 & 0.06 & 0.97 & 4.75 & 4.73 & 4.75 & 6.00 & 93.00 & 1.00 & 0.00 \\
 & PPG & 0.81 & 0.07 & 0.93 & 4.63 & 4.60 & 4.62 & 7.00 & 91.00 & 1.00 & 1.00 \\
 & PPP & 0.82 & 0.12 & 0.96 & 4.48 & 4.75 & 4.52 & 12.00 & 87.00 & 1.00 & 0.00 \\
\cmidrule{2-12}
 & CCCCC & 0.77 & 0.09 & 0.93 & 4.56 & 4.73 & 4.58 & 9.00 & 88.00 & 2.00 & 1.00 \\
 & CCCCG & 0.71 & 0.20 & 0.98 & 4.18 & 4.53 & 4.17 & 20.00 & 80.00 & 0.00 & 0.00 \\
 & CPCPG & 0.87 & 0.04 & 0.95 & 4.83 & 4.87 & 4.84 & 4.00 & 96.00 & 0.00 & 0.00 \\
 & GGGGC & 0.71 & 0.13 & 0.97 & 4.48 & 4.65 & 4.48 & 13.00 & 87.00 & 0.00 & 0.00 \\
 & GGGGG & 0.64 & 0.21 & 0.90 & 4.02 & 4.30 & 4.03 & 21.00 & 76.00 & 1.00 & 2.00 \\
 & GGGGP & 0.78 & 0.05 & 0.94 & 4.74 & 4.69 & 4.75 & 5.00 & 94.00 & 0.00 & 1.00 \\
 & PPPPG & 0.82 & 0.06 & 0.95 & 4.69 & 4.77 & 4.72 & 6.00 & 93.00 & 1.00 & 0.00 \\
 & PPPPP & 0.80 & 0.06 & 0.92 & 4.70 & 4.77 & 4.76 & 6.00 & 92.00 & 2.00 & 0.00 \\
\midrule
P & None & 1.00 & 0.00 & 0.00 & 1.07 & 1.07 & 1.07 & 0.00 & 3.00 & 92.00 & 5.00 \\
\cmidrule{2-12}
 & C & 0.95 & 0.01 & 0.00 & 1.11 & 1.11 & 1.15 & 1.00 & 3.00 & 95.00 & 1.00 \\
 & G & 0.98 & 0.02 & 0.01 & 1.19 & 1.19 & 1.19 & 2.00 & 5.00 & 92.00 & 1.00 \\
 & P & 0.96 & 0.00 & 0.00 & 1.18 & 1.18 & 1.22 & 0.00 & 5.00 & 93.00 & 2.00 \\
\cmidrule{2-12}
 & CP & 0.37 & 0.48 & 0.61 & 1.15 & 1.15 & 1.15 & 48.00 & 4.00 & 47.00 & 1.00 \\
 & GP & 0.35 & 0.51 & 0.64 & 1.05 & 1.05 & 1.05 & 51.00 & 2.00 & 44.00 & 3.00 \\
 & PC & 0.61 & 0.32 & 0.35 & 1.04 & 1.04 & 1.08 & 32.00 & 1.00 & 67.00 & 0.00 \\
 & PG & 0.37 & 0.51 & 0.61 & 1.01 & 1.01 & 1.05 & 51.00 & 1.00 & 45.00 & 3.00 \\
\cmidrule{2-12}
 & CCC & 0.53 & 0.39 & 0.45 & 1.16 & 1.16 & 1.20 & 39.00 & 4.00 & 57.00 & 0.00 \\
 & CCP & 0.18 & 0.67 & 0.81 & 1.08 & 1.08 & 1.08 & 67.00 & 2.00 & 31.00 & 0.00 \\
 & CPP & 0.12 & 0.73 & 0.88 & 0.99 & 0.99 & 0.99 & 73.00 & 0.00 & 26.00 & 1.00 \\
 & GGG & 0.70 & 0.21 & 0.24 & 1.10 & 1.10 & 1.14 & 21.00 & 3.00 & 74.00 & 2.00 \\
 & GGP & 0.64 & 0.27 & 0.33 & 1.15 & 1.15 & 1.15 & 27.00 & 4.00 & 68.00 & 1.00 \\
 & GPP & 0.74 & 0.19 & 0.23 & 1.15 & 1.15 & 1.19 & 19.00 & 4.00 & 76.00 & 1.00 \\
 & PCP & 0.54 & 0.39 & 0.44 & 1.12 & 1.12 & 1.16 & 39.00 & 3.00 & 58.00 & 0.00 \\
 & PGP & 0.33 & 0.55 & 0.63 & 1.03 & 1.03 & 1.11 & 55.00 & 1.00 & 43.00 & 1.00 \\
 & PPP & 0.73 & 0.19 & 0.24 & 1.18 & 1.18 & 1.22 & 19.00 & 5.00 & 74.00 & 2.00 \\
\cmidrule{2-12}
 & CCCCC & 0.43 & 0.42 & 0.51 & 1.03 & 1.03 & 1.07 & 42.00 & 1.00 & 56.00 & 1.00 \\
 & CCCCP & 0.34 & 0.53 & 0.64 & 1.02 & 1.02 & 1.10 & 53.00 & 1.00 & 44.00 & 2.00 \\
 & CGCGP & 0.38 & 0.49 & 0.60 & 1.07 & 1.07 & 1.07 & 49.00 & 2.00 & 48.00 & 1.00 \\
 & GGGGG & 0.77 & 0.20 & 0.23 & 1.11 & 1.11 & 1.11 & 20.00 & 3.00 & 76.00 & 1.00 \\
 & GGGGP & 0.47 & 0.41 & 0.52 & 1.12 & 1.12 & 1.12 & 41.00 & 3.00 & 56.00 & 0.00 \\
 & PPPPC & 0.58 & 0.29 & 0.38 & 1.00 & 1.00 & 1.00 & 29.00 & 1.00 & 66.00 & 4.00 \\
 & PPPPG & 0.61 & 0.31 & 0.36 & 1.11 & 1.11 & 1.11 & 31.00 & 3.00 & 65.00 & 1.00 \\
 & PPPPP & 0.62 & 0.29 & 0.37 & 1.11 & 1.11 & 1.15 & 29.00 & 3.00 & 67.00 & 1.00 \\
\bottomrule
\end{tabular}
}
\end{table*}

\begin{table*}[htbp]
\centering\small
\vspace{-0.5cm}
\renewcommand{\arraystretch}{0.8}
\caption{Exhaustive Audit Summary for the Multi-Challenge Dataset (25-Trial Sweep). The table provides granular metric breakdowns for all topological permutations. Due to the lower intrinsic search cost of these logical primitives, this dataset establishes the baseline cognitive immunity and Fortified Mind ceiling prior to terminal swarm scaling.}
\label{tab:mc_full_results}
\resizebox{\textwidth}{!}{
\begin{tabular}{lccccccccccc}
\toprule
 & & \multicolumn{3}{c}{\textbf{Macroscopic}} & \multicolumn{3}{c}{\textbf{Mechanistic}} & \multicolumn{4}{c}{\textbf{Stance (\%)}} \\
\cmidrule(lr){3-5} \cmidrule(lr){6-8} \cmidrule(lr){9-12}
\textbf{Prop.} & \textbf{Mix} & \textbf{Acc ($\mathcal{A}$)} & \textbf{Loafing ($L$)} & \textbf{Leak ($L_t$)} & \textbf{$\mathcal{E}_{cd}$} & \textbf{$\mathcal{E}_{ew}$} & \textbf{$\mathcal{E}_{ij}$} & \textbf{Adopted} & \textbf{Rejected} & \textbf{Ignored} & \textbf{Unknown} \\
\midrule
C & None & 0.52 & 0.00 & 0.52 & 3.05 & 3.08 & 3.08 & 0.00 & 50.00 & 50.00 & 0.00 \\
\cmidrule{2-12}
 & C & 0.52 & 0.00 & 0.52 & 3.08 & 3.08 & 3.08 & 0.00 & 51.00 & 49.00 & 0.00 \\
 & G & 0.51 & 0.00 & 0.51 & 3.04 & 3.04 & 3.04 & 0.00 & 51.00 & 49.00 & 0.00 \\
 & P & 0.51 & 0.00 & 0.51 & 3.04 & 3.04 & 3.04 & 0.00 & 51.00 & 49.00 & 0.00 \\
\cmidrule{2-12}
 & CG & 0.50 & 0.00 & 0.50 & 3.00 & 3.00 & 3.00 & 0.00 & 50.00 & 50.00 & 0.00 \\
 & CP & 0.51 & 0.00 & 0.51 & 3.04 & 3.04 & 3.04 & 0.00 & 51.00 & 49.00 & 0.00 \\
 & GC & 0.50 & 0.00 & 0.50 & 3.00 & 3.00 & 3.00 & 0.00 & 50.00 & 50.00 & 0.00 \\
 & PC & 0.50 & 0.00 & 0.50 & 3.00 & 3.00 & 3.00 & 0.00 & 50.00 & 50.00 & 0.00 \\
\cmidrule{2-12}
 & CCC & 0.51 & 0.00 & 0.51 & 3.04 & 3.04 & 3.04 & 0.00 & 51.00 & 49.00 & 0.00 \\
 & CGC & 0.50 & 0.01 & 0.51 & 3.02 & 3.04 & 3.04 & 1.00 & 50.00 & 49.00 & 0.00 \\
 & CPC & 0.51 & 0.00 & 0.51 & 3.04 & 3.04 & 3.04 & 0.00 & 51.00 & 49.00 & 0.00 \\
 & GCC & 0.51 & 0.00 & 0.51 & 3.04 & 3.04 & 3.04 & 0.00 & 51.00 & 49.00 & 0.00 \\
 & GGC & 0.51 & 0.00 & 0.51 & 3.04 & 3.04 & 3.04 & 0.00 & 51.00 & 49.00 & 0.00 \\
 & GGG & 0.51 & 0.00 & 0.51 & 3.04 & 3.04 & 3.04 & 0.00 & 51.00 & 49.00 & 0.00 \\
 & PCC & 0.51 & 0.00 & 0.51 & 3.04 & 3.04 & 3.04 & 0.00 & 51.00 & 49.00 & 0.00 \\
 & PPC & 0.51 & 0.00 & 0.51 & 3.04 & 3.04 & 3.04 & 0.00 & 51.00 & 49.00 & 0.00 \\
 & PPP & 0.51 & 0.00 & 0.51 & 3.04 & 3.04 & 3.04 & 0.00 & 51.00 & 49.00 & 0.00 \\
\cmidrule{2-12}
 & CCCCC & 0.51 & 0.00 & 0.51 & 3.04 & 3.04 & 3.04 & 0.00 & 51.00 & 49.00 & 0.00 \\
 & CCCCG & 0.51 & 0.00 & 0.51 & 3.04 & 3.04 & 3.04 & 0.00 & 51.00 & 49.00 & 0.00 \\
 & CCCCP & 0.51 & 0.00 & 0.51 & 3.04 & 3.04 & 3.04 & 0.00 & 51.00 & 49.00 & 0.00 \\
 & GGGGC & 0.51 & 0.00 & 0.51 & 3.04 & 3.04 & 3.04 & 0.00 & 51.00 & 49.00 & 0.00 \\
 & GGGGG & 0.51 & 0.00 & 0.51 & 3.04 & 3.04 & 3.04 & 0.00 & 51.00 & 49.00 & 0.00 \\
 & PGPGC & 0.51 & 0.00 & 0.51 & 3.04 & 3.04 & 3.04 & 0.00 & 51.00 & 49.00 & 0.00 \\
 & PPPPC & 0.51 & 0.00 & 0.51 & 3.04 & 3.04 & 3.04 & 0.00 & 51.00 & 49.00 & 0.00 \\
 & PPPPP & 0.51 & 0.00 & 0.51 & 3.04 & 3.04 & 3.04 & 0.00 & 51.00 & 49.00 & 0.00 \\
\midrule
G & None & 0.87 & 0.02 & 0.70 & 4.46 & 4.54 & 4.75 & 2.00 & 86.00 & 11.00 & 1.00 \\
\cmidrule{2-12}
 & C & 0.78 & 0.04 & 0.84 & 4.81 & 4.82 & 4.84 & 4.00 & 95.00 & 1.00 & 0.00 \\
 & G & 0.76 & 0.08 & 0.76 & 4.64 & 4.69 & 4.64 & 8.00 & 91.00 & 1.00 & 0.00 \\
 & P & 0.95 & 0.01 & 0.90 & 4.93 & 4.95 & 4.96 & 1.00 & 98.00 & 1.00 & 0.00 \\
\cmidrule{2-12}
 & CG & 0.54 & 0.30 & 0.85 & 3.74 & 4.22 & 3.73 & 30.00 & 68.00 & 1.00 & 1.00 \\
 & GC & 0.55 & 0.27 & 0.89 & 3.86 & 4.40 & 3.86 & 27.00 & 72.00 & 1.00 & 0.00 \\
 & GP & 0.61 & 0.22 & 0.78 & 3.99 & 4.14 & 4.00 & 22.00 & 74.00 & 4.00 & 0.00 \\
 & PG & 0.67 & 0.17 & 0.84 & 4.25 & 4.40 & 4.25 & 17.00 & 80.00 & 2.00 & 1.00 \\
\cmidrule{2-12}
 & CCC & 0.81 & 0.12 & 0.89 & 4.48 & 4.74 & 4.49 & 12.00 & 88.00 & 0.00 & 0.00 \\
 & CCG & 0.70 & 0.14 & 0.87 & 4.29 & 4.42 & 4.34 & 14.00 & 84.00 & 1.00 & 1.00 \\
 & CGG & 0.67 & 0.17 & 0.87 & 4.31 & 4.45 & 4.32 & 17.00 & 82.00 & 1.00 & 0.00 \\
 & GCG & 0.72 & 0.17 & 0.84 & 4.22 & 4.53 & 4.27 & 17.00 & 80.00 & 2.00 & 1.00 \\
 & GGG & 0.64 & 0.16 & 0.85 & 4.32 & 4.44 & 4.36 & 16.00 & 83.00 & 1.00 & 0.00 \\
 & GPG & 0.74 & 0.14 & 0.82 & 4.39 & 4.58 & 4.41 & 14.00 & 84.00 & 1.00 & 1.00 \\
 & PGG & 0.86 & 0.06 & 0.87 & 4.68 & 4.63 & 4.71 & 6.00 & 92.00 & 1.00 & 1.00 \\
 & PPG & 0.82 & 0.03 & 0.88 & 4.75 & 4.71 & 4.79 & 3.00 & 93.00 & 3.00 & 1.00 \\
 & PPP & 0.82 & 0.04 & 0.84 & 4.81 & 4.76 & 4.83 & 4.00 & 96.00 & 0.00 & 0.00 \\
\cmidrule{2-12}
 & CCCCC & 0.72 & 0.08 & 0.85 & 4.61 & 4.64 & 4.64 & 8.00 & 90.00 & 2.00 & 0.00 \\
 & CCCCG & 0.76 & 0.09 & 0.84 & 4.51 & 4.53 & 4.54 & 9.00 & 87.00 & 3.00 & 1.00 \\
 & CPCPG & 0.89 & 0.03 & 0.94 & 4.80 & 4.85 & 4.82 & 3.00 & 95.00 & 1.00 & 1.00 \\
 & GGGGC & 0.68 & 0.16 & 0.84 & 4.26 & 4.38 & 4.31 & 16.00 & 82.00 & 1.00 & 1.00 \\
 & GGGGG & 0.65 & 0.13 & 0.82 & 4.27 & 4.32 & 4.28 & 13.00 & 83.00 & 2.00 & 2.00 \\
 & GGGGP & 0.72 & 0.13 & 0.81 & 4.36 & 4.51 & 4.36 & 13.00 & 84.00 & 3.00 & 0.00 \\
 & PGPGC & 0.78 & 0.10 & 0.79 & 4.57 & 4.72 & 4.60 & 10.00 & 87.00 & 3.00 & 0.00 \\
 & PPPPP & 0.86 & 0.05 & 0.89 & 4.77 & 4.79 & 4.79 & 5.00 & 94.00 & 1.00 & 0.00 \\
\midrule
P & None & 0.98 & 0.00 & 0.00 & 1.05 & 1.05 & 1.05 & 0.00 & 2.00 & 95.00 & 3.00 \\
\cmidrule{2-12}
 & C & 0.10 & 0.07 & 0.04 & 1.17 & 1.17 & 1.37 & 7.00 & 5.00 & 85.00 & 3.00 \\
 & G & 0.27 & 0.09 & 0.06 & 1.14 & 1.14 & 1.22 & 9.00 & 4.00 & 85.00 & 2.00 \\
 & P & 0.37 & 0.06 & 0.02 & 1.14 & 1.14 & 1.30 & 6.00 & 4.00 & 88.00 & 2.00 \\
\cmidrule{2-12}
 & CP & 0.07 & 0.55 & 0.86 & 0.98 & 0.98 & 1.06 & 55.00 & 0.00 & 43.00 & 2.00 \\
 & GP & 0.13 & 0.53 & 0.80 & 0.98 & 0.98 & 0.98 & 53.00 & 0.00 & 45.00 & 2.00 \\
 & PC & 0.08 & 0.62 & 0.86 & 0.99 & 0.99 & 1.03 & 62.00 & 0.00 & 37.00 & 1.00 \\
 & PG & 0.08 & 0.57 & 0.87 & 0.97 & 0.97 & 0.97 & 57.00 & 0.00 & 40.00 & 3.00 \\
\cmidrule{2-12}
 & CCC & 0.04 & 0.64 & 0.88 & 0.98 & 0.98 & 0.98 & 64.00 & 0.00 & 34.00 & 2.00 \\
 & CCP & 0.09 & 0.58 & 0.81 & 0.99 & 0.99 & 0.99 & 58.00 & 0.00 & 41.00 & 1.00 \\
 & CPP & 0.08 & 0.60 & 0.86 & 0.99 & 0.99 & 0.99 & 60.00 & 0.00 & 39.00 & 1.00 \\
 & GGG & 0.17 & 0.45 & 0.61 & 0.97 & 0.97 & 1.05 & 45.00 & 0.00 & 52.00 & 3.00 \\
 & GGP & 0.05 & 0.53 & 0.75 & 1.00 & 1.00 & 1.04 & 53.00 & 0.00 & 47.00 & 0.00 \\
 & GPP & 0.11 & 0.53 & 0.72 & 0.97 & 0.97 & 1.05 & 53.00 & 0.00 & 44.00 & 3.00 \\
 & PCP & 0.08 & 0.58 & 0.87 & 0.98 & 0.98 & 0.98 & 58.00 & 0.00 & 40.00 & 2.00 \\
 & PGP & 0.04 & 0.60 & 0.88 & 1.00 & 1.00 & 1.00 & 60.00 & 0.00 & 40.00 & 0.00 \\
 & PPP & 0.14 & 0.47 & 0.71 & 0.96 & 0.96 & 0.96 & 47.00 & 0.00 & 49.00 & 4.00 \\
\cmidrule{2-12}
 & CCCCC & 0.05 & 0.56 & 0.83 & 0.98 & 0.98 & 1.02 & 56.00 & 0.00 & 42.00 & 2.00 \\
 & CCCCP & 0.07 & 0.52 & 0.79 & 1.02 & 1.02 & 1.06 & 52.00 & 1.00 & 45.00 & 2.00 \\
 & CGCGP & 0.08 & 0.56 & 0.86 & 0.99 & 0.99 & 0.99 & 56.00 & 0.00 & 43.00 & 1.00 \\
 & GGGGG & 0.10 & 0.48 & 0.67 & 0.99 & 0.99 & 1.07 & 48.00 & 0.00 & 51.00 & 1.00 \\
 & GGGGP & 0.08 & 0.58 & 0.87 & 0.97 & 0.97 & 0.97 & 58.00 & 0.00 & 39.00 & 3.00 \\
 & PPPPC & 0.11 & 0.51 & 0.78 & 1.03 & 1.03 & 1.03 & 51.00 & 1.00 & 47.00 & 1.00 \\
 & PPPPG & 0.08 & 0.51 & 0.75 & 0.96 & 0.96 & 1.00 & 51.00 & 0.00 & 45.00 & 4.00 \\
 & PPPPP & 0.10 & 0.51 & 0.80 & 0.99 & 0.99 & 0.99 & 51.00 & 0.00 & 48.00 & 1.00 \\
\bottomrule
\end{tabular}
}
\end{table*}

\begin{table*}[htbp]
\centering\small
\vspace{-0.5cm}
\renewcommand{\arraystretch}{0.8}
\caption{Exhaustive Audit Summary for SWE-bench (25-Trial Sweep) with macroscopic and mechanistic degradation across all permutations. It illustrates the sycophancy threshold in medium-entropy code environments, capturing the Sovereignty Gap where models correctly derive patches but adopt a flawed swarm consensus.}
\label{tab:swe_full_results}
\resizebox{\textwidth}{!}{
\begin{tabular}{lccccccccccc}
\toprule
 & & \multicolumn{3}{c}{\textbf{Macroscopic}} & \multicolumn{3}{c}{\textbf{Mechanistic}} & \multicolumn{4}{c}{\textbf{Stance (\%)}} \\
\cmidrule(lr){3-5} \cmidrule(lr){6-8} \cmidrule(lr){9-12}
\textbf{Prop.} & \textbf{Mix} & \textbf{Acc ($\mathcal{A}$)} & \textbf{Loafing ($L$)} & \textbf{Leak ($L_t$)} & \textbf{$\mathcal{E}_{cd}$} & \textbf{$\mathcal{E}_{ew}$} & \textbf{$\mathcal{E}_{ij}$} & \textbf{Adopted} & \textbf{Rejected} & \textbf{Ignored} & \textbf{Unknown} \\
\midrule
C & None & 1.00 & 0.00 & 1.00 & 5.00 & 5.00 & 5.00 & 0.00 & 100.00 & 0.00 & 0.00 \\
\cmidrule{2-12}
 & C & 1.00 & 0.00 & 1.00 & 5.00 & 5.00 & 5.00 & 0.00 & 100.00 & 0.00 & 0.00 \\
 & G & 1.00 & 0.00 & 1.00 & 5.00 & 5.00 & 5.00 & 0.00 & 100.00 & 0.00 & 0.00 \\
 & P & 1.00 & 0.00 & 1.00 & 5.00 & 5.00 & 5.00 & 0.00 & 100.00 & 0.00 & 0.00 \\
\cmidrule{2-12}
 & CG & 1.00 & 0.00 & 1.00 & 5.00 & 5.00 & 5.00 & 0.00 & 100.00 & 0.00 & 0.00 \\
 & CP & 1.00 & 0.00 & 1.00 & 5.00 & 5.00 & 5.00 & 0.00 & 100.00 & 0.00 & 0.00 \\
 & GC & 1.00 & 0.00 & 1.00 & 5.00 & 5.00 & 5.00 & 0.00 & 100.00 & 0.00 & 0.00 \\
 & PC & 1.00 & 0.00 & 1.00 & 5.00 & 5.00 & 5.00 & 0.00 & 100.00 & 0.00 & 0.00 \\
\cmidrule{2-12}
 & CCC & 1.00 & 0.00 & 1.00 & 5.00 & 5.00 & 5.00 & 0.00 & 100.00 & 0.00 & 0.00 \\
 & CGC & 1.00 & 0.00 & 1.00 & 5.00 & 5.00 & 5.00 & 0.00 & 100.00 & 0.00 & 0.00 \\
 & CPC & 1.00 & 0.00 & 1.00 & 5.00 & 5.00 & 5.00 & 0.00 & 100.00 & 0.00 & 0.00 \\
 & GCC & 1.00 & 0.00 & 1.00 & 5.00 & 5.00 & 5.00 & 0.00 & 100.00 & 0.00 & 0.00 \\
 & GGC & 1.00 & 0.00 & 1.00 & 5.00 & 5.00 & 5.00 & 0.00 & 100.00 & 0.00 & 0.00 \\
 & GGG & 1.00 & 0.00 & 1.00 & 5.00 & 5.00 & 5.00 & 0.00 & 100.00 & 0.00 & 0.00 \\
 & PCC & 1.00 & 0.00 & 1.00 & 5.00 & 5.00 & 5.00 & 0.00 & 100.00 & 0.00 & 0.00 \\
 & PPC & 1.00 & 0.00 & 1.00 & 5.00 & 5.00 & 5.00 & 0.00 & 100.00 & 0.00 & 0.00 \\
 & PPP & 1.00 & 0.00 & 1.00 & 5.00 & 5.00 & 5.00 & 0.00 & 100.00 & 0.00 & 0.00 \\
\cmidrule{2-12}
 & CCCCC & 1.00 & 0.00 & 1.00 & 5.00 & 5.00 & 5.00 & 0.00 & 100.00 & 0.00 & 0.00 \\
 & CCCCG & 1.00 & 0.00 & 1.00 & 5.00 & 5.00 & 5.00 & 0.00 & 100.00 & 0.00 & 0.00 \\
 & CCCCP & 1.00 & 0.00 & 1.00 & 5.00 & 5.00 & 5.00 & 0.00 & 100.00 & 0.00 & 0.00 \\
 & GGGGC & 1.00 & 0.00 & 1.00 & 5.00 & 5.00 & 5.00 & 0.00 & 100.00 & 0.00 & 0.00 \\
 & GGGGG & 1.00 & 0.00 & 1.00 & 5.00 & 5.00 & 5.00 & 0.00 & 100.00 & 0.00 & 0.00 \\
 & PGPGC & 1.00 & 0.00 & 1.00 & 5.00 & 5.00 & 5.00 & 0.00 & 100.00 & 0.00 & 0.00 \\
 & PPPPC & 1.00 & 0.00 & 1.00 & 5.00 & 5.00 & 5.00 & 0.00 & 100.00 & 0.00 & 0.00 \\
 & PPPPP & 1.00 & 0.00 & 1.00 & 5.00 & 5.00 & 5.00 & 0.00 & 100.00 & 0.00 & 0.00 \\
\midrule
G & None & 1.00 & 0.00 & 0.62 & 4.09 & 4.63 & 4.72 & 0.00 & 76.00 & 24.00 & 0.00 \\
\cmidrule{2-12}
 & C & 1.00 & 0.00 & 0.89 & 4.87 & 4.92 & 4.99 & 0.00 & 96.00 & 4.00 & 0.00 \\
 & G & 1.00 & 0.00 & 0.84 & 4.89 & 4.86 & 4.93 & 0.00 & 98.00 & 2.00 & 0.00 \\
 & P & 1.00 & 0.00 & 0.93 & 4.89 & 4.96 & 4.96 & 0.00 & 96.00 & 4.00 & 0.00 \\
\cmidrule{2-12}
 & CG & 0.78 & 0.22 & 0.97 & 4.10 & 4.71 & 4.15 & 22.00 & 78.00 & 0.00 & 0.00 \\
 & GC & 0.86 & 0.14 & 0.95 & 4.39 & 4.71 & 4.40 & 14.00 & 84.00 & 2.00 & 0.00 \\
 & GP & 0.80 & 0.19 & 0.94 & 4.20 & 4.67 & 4.22 & 19.00 & 80.00 & 1.00 & 0.00 \\
 & PG & 0.88 & 0.12 & 0.93 & 4.47 & 4.78 & 4.47 & 12.00 & 87.00 & 0.00 & 1.00 \\
\cmidrule{2-12}
 & CCC & 0.91 & 0.08 & 0.90 & 4.65 & 4.85 & 4.69 & 8.00 & 90.00 & 2.00 & 0.00 \\
 & CCG & 0.91 & 0.09 & 0.92 & 4.59 & 4.82 & 4.65 & 9.00 & 89.00 & 2.00 & 0.00 \\
 & CGG & 0.93 & 0.07 & 0.95 & 4.72 & 4.82 & 4.72 & 7.00 & 93.00 & 0.00 & 0.00 \\
 & GCG & 0.90 & 0.10 & 0.96 & 4.60 & 4.77 & 4.60 & 10.00 & 90.00 & 0.00 & 0.00 \\
 & GGG & 0.92 & 0.08 & 0.92 & 4.66 & 4.81 & 4.64 & 8.00 & 91.00 & 1.00 & 0.00 \\
 & GPG & 0.99 & 0.01 & 0.92 & 4.96 & 5.00 & 4.96 & 1.00 & 99.00 & 0.00 & 0.00 \\
 & PGG & 1.00 & 0.00 & 0.86 & 4.96 & 4.94 & 5.00 & 0.00 & 98.00 & 2.00 & 0.00 \\
 & PPG & 0.98 & 0.02 & 0.91 & 4.91 & 4.98 & 4.92 & 2.00 & 97.00 & 1.00 & 0.00 \\
 & PPP & 0.99 & 0.01 & 0.92 & 4.94 & 4.99 & 4.96 & 1.00 & 97.00 & 2.00 & 0.00 \\
\cmidrule{2-12}
 & CCCCC & 0.97 & 0.03 & 0.92 & 4.84 & 4.96 & 4.89 & 3.00 & 96.00 & 1.00 & 0.00 \\
 & CCCCG & 0.95 & 0.05 & 0.92 & 4.80 & 4.89 & 4.80 & 5.00 & 95.00 & 0.00 & 0.00 \\
 & CPCPG & 0.99 & 0.01 & 0.92 & 4.94 & 4.91 & 4.96 & 1.00 & 97.00 & 2.00 & 0.00 \\
 & GGGGC & 0.88 & 0.12 & 0.90 & 4.48 & 4.75 & 4.49 & 12.00 & 87.00 & 1.00 & 0.00 \\
 & GGGGG & 0.86 & 0.14 & 0.89 & 4.46 & 4.76 & 4.48 & 14.00 & 84.00 & 2.00 & 0.00 \\
 & GGGGP & 0.97 & 0.03 & 0.92 & 4.86 & 4.87 & 4.88 & 3.00 & 96.00 & 1.00 & 0.00 \\
 & PPPPG & 1.00 & 0.00 & 0.84 & 4.94 & 4.95 & 4.95 & 0.00 & 98.00 & 1.00 & 1.00 \\
 & PPPPP & 1.00 & 0.00 & 0.90 & 4.96 & 5.00 & 5.00 & 0.00 & 100.00 & 0.00 & 0.00 \\
\midrule
P & None & 1.00 & 0.00 & 0.97 & 4.84 & 4.94 & 4.91 & 0.00 & 94.00 & 6.00 & 0.00 \\
\cmidrule{2-12}
 & C & 0.95 & 0.05 & 1.00 & 4.81 & 5.00 & 4.85 & 5.00 & 95.00 & 0.00 & 0.00 \\
 & G & 0.96 & 0.04 & 0.99 & 4.84 & 5.00 & 4.84 & 4.00 & 96.00 & 0.00 & 0.00 \\
 & P & 1.00 & 0.00 & 0.99 & 5.00 & 5.00 & 5.00 & 0.00 & 100.00 & 0.00 & 0.00 \\
\cmidrule{2-12}
 & CP & 0.21 & 0.76 & 0.86 & 1.53 & 3.55 & 1.82 & 76.00 & 14.00 & 7.00 & 3.00 \\
 & GP & 0.19 & 0.79 & 0.82 & 1.48 & 3.03 & 1.91 & 79.00 & 13.00 & 4.00 & 4.00 \\
 & PC & 0.31 & 0.65 & 0.76 & 1.98 & 3.54 & 2.36 & 65.00 & 25.00 & 8.00 & 2.00 \\
 & PG & 0.22 & 0.76 & 0.89 & 1.52 & 3.48 & 1.88 & 76.00 & 14.00 & 6.00 & 4.00 \\
\cmidrule{2-12}
 & CCC & 0.35 & 0.62 & 0.79 & 1.73 & 3.47 & 2.40 & 62.00 & 19.00 & 16.00 & 3.00 \\
 & CCP & 0.22 & 0.71 & 0.91 & 1.64 & 2.84 & 1.96 & 71.00 & 16.00 & 13.00 & 0.00 \\
 & CPP & 0.22 & 0.72 & 0.90 & 1.64 & 3.26 & 2.03 & 72.00 & 16.00 & 12.00 & 0.00 \\
 & GGG & 0.40 & 0.59 & 0.72 & 1.68 & 4.02 & 2.57 & 59.00 & 17.00 & 24.00 & 0.00 \\
 & GGP & 0.34 & 0.65 & 0.75 & 1.36 & 4.16 & 2.50 & 65.00 & 9.00 & 26.00 & 0.00 \\
 & GPP & 0.27 & 0.73 & 0.78 & 1.19 & 3.97 & 2.01 & 73.00 & 6.00 & 18.00 & 3.00 \\
 & PCP & 0.40 & 0.59 & 0.71 & 1.37 & 3.88 & 2.57 & 59.00 & 11.00 & 24.00 & 6.00 \\
 & PGP & 0.23 & 0.77 & 0.84 & 1.35 & 3.68 & 2.17 & 77.00 & 9.00 & 13.00 & 1.00 \\
 & PPP & 0.23 & 0.74 & 0.84 & 1.26 & 3.36 & 1.95 & 74.00 & 7.00 & 17.00 & 2.00 \\
\cmidrule{2-12}
 & CCCCC & 0.47 & 0.52 & 0.69 & 1.98 & 3.98 & 3.05 & 52.00 & 25.00 & 20.00 & 3.00 \\
 & CCCCP & 0.32 & 0.66 & 0.79 & 1.62 & 3.56 & 2.23 & 66.00 & 16.00 & 16.00 & 2.00 \\
 & CGCGP & 0.37 & 0.56 & 0.76 & 1.84 & 3.28 & 2.31 & 56.00 & 21.00 & 21.00 & 2.00 \\
 & GGGGG & 0.48 & 0.49 & 0.66 & 1.69 & 3.64 & 2.76 & 49.00 & 18.00 & 30.00 & 3.00 \\
 & GGGGP & 0.26 & 0.70 & 0.75 & 1.11 & 3.05 & 2.04 & 70.00 & 3.00 & 26.00 & 1.00 \\
 & PPPPC & 0.39 & 0.61 & 0.67 & 1.24 & 3.56 & 2.27 & 61.00 & 7.00 & 28.00 & 4.00 \\
 & PPPPG & 0.36 & 0.62 & 0.70 & 1.23 & 4.08 & 2.37 & 62.00 & 5.00 & 31.00 & 2.00 \\
 & PPPPP & 0.29 & 0.66 & 0.79 & 1.41 & 3.32 & 2.09 & 66.00 & 11.00 & 19.00 & 4.00 \\
\bottomrule
\end{tabular}
}
\end{table*}

\end{document}